\documentclass[11pt]{article} 

\usepackage[utf8]{inputenc}
\usepackage[T1]{fontenc}
\usepackage[pdftex,left=1in,top=1in,bottom=1in,right=1in]{geometry}

\usepackage{amsmath, amssymb, dsfont, mathrsfs,amsthm}

\usepackage{graphicx}
\usepackage{caption}
\usepackage{bm}
\usepackage{xcolor}
\usepackage[colorlinks=true,citecolor=blue,linkcolor=blue]{hyperref}
\hypersetup{
    colorlinks,
    linkcolor={red!50!black},
    citecolor={blue!50!black},
    urlcolor={blue!80!black}
}
\usepackage[nameinlink, noabbrev, capitalize]{cleveref}
\usepackage{tikz}
\usetikzlibrary{shapes,arrows,positioning}
\usetikzlibrary{decorations.pathmorphing}
\usepackage[linesnumbered,ruled,vlined]{algorithm2e}
\usepackage{ragged2e}
\usepackage{booktabs}

\crefname{algocf}{Algorithm}{Algorithms}

\newcommand{\N}{\mathbb{N}}

\newcommand{\cS}{\mathcal{S}}
\newcommand{\cT}{\mathcal{T}}

\newcommand{\cX}{\mathcal{X}}
\newcommand{\cY}{\mathcal{Y}}

\newcommand{\cO}{\mathcal{O}}
\newcommand{\cR}{\mathcal{R}}

\newcommand{\NextPos}{\mathbf{NextPos}}

\newcommand{\Cnt}{\mathbf{Count}}

\newcommand{\rem}{\mathrm{rem}}
\newcommand{\subseq}{\mathrm{subseq}}

\newcommand{\cpl}{\operatorname{cpl}}
\newcommand{\rev}{\operatorname{rev}}

\newcommand{\unrankk}{\operatorname{stringUnrank}}
\newcommand{\recsuper}{\operatorname{RecSuperSeq}}
\newcommand{\subunrank}{\operatorname{subsetUnrank}}

\newtheorem{theorem}{Theorem}

\newtheorem{lemma}{Lemma}

\newcommand{\bits}{\{0,1\}}

\newcommand{\BDC}{\mathrm{BDC}}

\newcommand{\fCap}{C}

\newcommand{\rep}{\operatorname{rep}}

\newcommand{\Stab}{\mathrm{Stab}}

\newcommand{\jnote}[1]{{\color{red} \footnotesize(João: #1)}}

\let\originalleft\left
\let\originalright\right
\renewcommand{\left}{\mathopen{}\mathclose\bgroup\originalleft}
\renewcommand{\right}{\aftergroup\egroup\originalright}

\title{Improved Capacity Upper Bounds for the Deletion Channel using a Parallelized Blahut-Arimoto Algorithm
}

\allowdisplaybreaks

\author{Martim Pinto\thanks{Departamento de Matemática, Instituto Superior Técnico, Universidade de Lisboa.
\texttt{martim.velasco.santos.pinto@tecnico.ulisboa.pt}} \and João Ribeiro\thanks{Instituto de Telecomunicações and Departamento de Matemática, Instituto Superior Técnico, Universidade de Lisboa. \texttt{jribeiro@tecnico.ulisboa.pt}}
}

\date{}

\begin{document}
	
\maketitle
	
\begin{abstract}
We present an optimized implementation of the Blahut-Arimoto algorithm via GPU parallelization,
which we use to obtain improved upper bounds on the capacity of the binary deletion channel.
In particular, our results imply that the capacity of the binary deletion channel with deletion probability $d$ is at most $0.3578(1-d)$ for all $d\geq 0.64$.

\end{abstract}

\newpage

\tableofcontents

\newpage

\section{Introduction}\label{sec:intro}

Synchronization errors, such as deletions and insertions, are a common occurrence in communication and data storage systems, most notably in emerging DNA-based data-storage technologies~\cite{YGM17,OAC+18,HMG19,WGGH23}.
This motivates the study of channels with synchronization errors, also called \emph{synchronization channels}.
One of the simplest synchronization channels is the \emph{binary deletion channel} (BDC), which on input a bitstring $x\in\bits^n$ independently deletes each bit of $x$ with some fixed deletion probability $d$.
The corresponding output of this channel would then be the subsequence of $x$ consisting of its ``undeleted'' bits.

The BDC is closely related to the binary erasure channel (BEC), the only difference being that we do not replace the deleted bits by a ``?''.
However, despite this similarity, our state of knowledge about these two channels is wildly different.
We have known the capacity of the BEC since Shannon's early seminal work~\cite{Sha48} and the study of efficient coding for this channel has led to a rich mathematical theory.
On the other hand, we still only know relatively loose bounds on the capacity of the BDC (let alone insights on its other properties), despite an extensive research effort on deriving both capacity lower bounds~\cite{Gal61,VD68,Zig69,DG06,MD06,DM07,DK07,Mit08,KD10,MTL12,RA13,VTR13,CK15,ISW16,RC23} and capacity upper bounds~\cite{DMP07,Mit08,FD10,Dal11,MTL12,RD15,Che19,CR19b,RC23} for the BDC and related synchronization channels.
The main reason behind this is that, although the behavior of the BDC is memoryless like the BEC, it causes a loss of synchronization between sender and receiver: when the receiver looks at the $i$-th output bit, they are not sure to which input bit it corresponds.
For a more detailed discussion of the challenges imposed by this loss of synchronization, see the surveys~\cite{Mit09,MBT10,CR20,HS21}.

\subsection{State-of-the-art capacity upper bounds for the BDC and the underlying barriers}
From here onward we denote the BDC with deletion probability $d$ by $\BDC_d$, and its capacity by $C(d)$. 
The best known upper bounds on $C(d)$ are obtained by numerically approximating the capacity of a ``finite-length'' version of the $\BDC_d$, an approach first studied by Fertonani and Duman~\cite{FD10}.
More precisely, for any given integer $n\geq 1$ we may consider the discrete memoryless channel (DMC) with input alphabet $\bits^n$ and output alphabet $\bits^{\leq n}=\bigcup_{i=0}^n \bits^i$ which on input $x\in\bits^n$ behaves exactly like the $\BDC_d$ on input $x$.
By the noisy channel coding theorem, the capacity of this channel, which we denote by $\fCap_n(d)$, is given by
\begin{equation*}
    \fCap_n(d) = \sup_{X^n}I(X^n;Y),
\end{equation*}
where the supremum is over all random variables $X^n$ supported on $\bits^n$, $Y$ is the corresponding output distribution of the $\BDC_d$ on input $X^n$, and $I(\cdot;\cdot)$ denotes mutual information.
A simple argument (found, for example, in~\cite{FD10,Dal11}) combining the subadditivity of the sequence $\left(\frac{1}{n}\fCap_n(d)\right)_{n\geq 1}$, Fekete's lemma, and the fact, established by Dobrushin~\cite{Dob67}, that
\begin{equation*}
    \lim_{n\to\infty}\frac{1}{n}\fCap_n(d) = C(d),
\end{equation*}
implies that
\begin{equation}\label{eq:fCap-UB}
    C(d)\leq \frac{1}{n}\fCap_n(d)
\end{equation}
for all $n\geq 1$.
Therefore, we can upper bound $C(d)$ by upper bounding $\fCap_n(d)$ for any $n\geq 1$.
For example, by taking $n=1$ we recover the easy upper bound $C(d)\leq 1-d$, valid for all $d\in[0,1]$, and we can obtain better upper bounds by considering larger $n$.

A key observation is that $\fCap_n(d)$ is the capacity of a DMC with finite input alphabet (of size $2^n$) and finite output alphabet (of size $2^{n+1}-1$).
The well-known Blahut-Arimoto algorithm~\cite{Ari72,Bla72} can, in principle, numerically approximate the capacity of any DMC with finite input and output alphabets to any desired accuracy.
By the connection above, arbitrarily good numerical approximations of $C_n(d)$ would lead to arbitrarily good approximations of $C(d)$, for any deletion probability $d$.\footnote{Fertonani and Duman~\cite{FD10} and later works, including ours, actually numerically approximate the capacity of the related \emph{exact deletion channels} parameterized by $n\geq 1$ and $0\leq k\leq n$ which receive an $n$-bit string $x$ and delete a uniformly random subset of $n-k$ bits of $x$, and then use these values to upper bound $C(d)$.
This discussion applies equally well to those channels.
For the sake of simplicity, we focus on a direct analysis of $\fCap_n(d)$ here and leave a discussion of these proxy channels to \cref{sec:aux-channel}.}

The main issue with the approach in the previous paragraph is that the time and space complexity of the Blahut-Arimoto scales badly with the size of the input and output alphabets of the DMC under analysis.
Therefore, a naive implementation of the Blahut-Arimoto algorithm will only produce results in a reasonable timeframe for small values of the input length $n$. 
For example, Fertonani and Duman~\cite{FD10} were able to run the BAA only up to $n=17$.
This motivates the following challenge: 
\begin{quote}
    \em
    Can we optimize the implementation of the Blahut-Arimoto with the deletion channel in mind so that we can obtain good bounds on $\fCap_n(d)$ for significantly larger $n$?
\end{quote}

\noindent Recently, Rubinstein and Con~\cite{RC23} took on this challenge and presented an implementation of the Blahut-Arimoto algorithm with lower space complexity for this problem.
They were able to apply this algorithm to input lengths up to $n=28$, obtaining the current state-of-the-art upper bounds on $C(d)$.

\subsection{Our contributions}

We present an optimized implementation of the Blahut-Arimoto algorithm using GPU parallelization, and use it to obtain improved upper bounds on the capacity of the BDC for the entire range of the deletion probability $d$.

More precisely, we use our optimized implementation of the Blahut-Arimoto algorithm to compute upper bounds on $\fCap_n(d)$ for input lengths up to $n=31$.\footnote{To be more precise, we computed good approximations of the exact deletion channel capacities $C_{n,k}$ (see \cref{sec:aux-channel}) for all $k\leq n\leq 29$, and for $n=31$ and all $k\leq 18$. In contrast, Rubinstein and Con~\cite{RC23} were able to compute good approximations of the $C_{n,k}$ values for all $n\leq 28$ and all $k$ satisfying $k + n \leq 39$. They were not able to approximate $C_{n,k}$ for some values of $k$ when $22 \leq n \leq 28$, due to the high space and time complexity of their implementation.}
The resulting improved bounds on $C(d)$ are reported in \cref{tab:C_d} for several values of $d$.
Here, we expand on their consequences in the asymptotic high-noise setting where $d\to 1$, also studied in prior works~\cite{MD06,DMP07,FD10,Dal11,RD15,RC23}.
Combining our improved bound for $d=0.64$ with a result of Rahmati and Duman~\cite{RD15}, we conclude that
\begin{equation*}
    C(d)\leq 0.3578(1-d)
\end{equation*}
for all $d\geq 0.64$.
This improves on the previous best bound in the high-noise regime due to Rubinstein and Con~\cite{RC23}, which was $C(d)\leq 0.3745(1-d)$ for all $d\geq 0.68$.

The implementation used to obtain the upper bounds is publicly available in a GitHub repository~\cite{repo}.

\subsection{Acknowledgements}

We thank Roni Con for several insightful discussions that improved this paper.

This work was funded by the European Union (LESYNCH, 101218842).
Views and opinions expressed are however those of the authors only and do not necessarily reflect those of the European Union or the European Research Council Executive Agency. Neither the European Union nor the granting authority can be held responsible for them.

\section{Preliminaries}
\label{sec:prelims}

\subsection{Notation}

We denote random variables and sets by uppercase Roman letters such as $X$, $Y$, and $Z$.
Sets are sometimes also denoted by uppercase calligraphic letters such as $\cS$ and $\cT$.
For an integer $n$, we define $\bits^{\leq n}=\bigcup_{i=0}^n \bits^i$.
We index strings starting at $0$, and for $y\in\bits^n$ we define $y_{[a:b]}=(y_a,y_{a+1},\dots,y_b)$.
We write $\log$ for the base-$2$ logarithm.

\subsection{Exact finite-length deletion channels}\label{sec:aux-channel}

As already discussed in \cref{sec:intro}, our starting point is a finite-length version of the $\BDC_d$.
More precisely, given an integer $n\geq 1$ this is the DMC that accepts inputs from $\bits^n$ and, given $x\in\bits^n$, sends $x$ through the $\BDC_d$ and returns its output $y\in\bits^{\leq n}$.
We denote the capacity of this DMC by $\fCap_n(d)$.
As mentioned before, the following inequality holds.
\begin{lemma}[\protect{\cite{FD10,Dal11}}]
    For every $d\in[0,1]$ and integer $n\geq 1$ we have
    \begin{equation*}
        C(d)\leq \frac{1}{n}\fCap_n(d).
    \end{equation*}
\end{lemma}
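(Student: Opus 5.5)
The plan follows the argument sketched in the introduction. First show that the sequence $(\fCap_n(d))_{n\geq 1}$ is subadditive, i.e., $\fCap_{m+n}(d)\leq \fCap_m(d)+\fCap_n(d)$ for all $m,n\geq 1$. Then apply Fekete's lemma to get $\lim_{n\to\infty}\frac{1}{n}\fCap_n(d)=\inf_{n\geq 1}\frac1n\fCap_n(d)$. Finally, combine this with Dobrushin's theorem~\cite{Dob67}, which identifies this limit with $C(d)$. The conclusion is then $C(d)=\inf_n \frac1n \fCap_n(d)\leq \frac1n\fCap_n(d)$ for every $n$. The endpoint cases $d\in\{0,1\}$ are trivial: both sides equal $1$ and $0$, respectively. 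We may therefore assume $d\in(0,1)$, although the argument does not really need this.

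The main work is subadditivity. Fix an input distribution $X^{m+n}=(A,B)$ with $A\in\bits^m$ and $B\in\bits^n$. Because deletions act independently on each bit, the output $Y$ has the same law as the concatenation $Y_AY_B$. Here $Y_A$ is the $\BDC_d$ output on $A$, $Y_B$ is the output on $B$, and, conditionally on $(A,B)$, the two outputs are independent. We introduce the auxiliary ``split'' channel that returns the pair $(Y_A,Y_B)$. Since $Y$ is a deterministic function of this pair, the data processing inequality gives
\begin{equation*}
I(A,B;Y)\leq I(A,B;Y_A,Y_B).
\end{equation*}
Using the conditional independence and the chain rule, we get
\begin{equation*}
I(A,B;Y_A,Y_B)=H(Y_A,Y_B)-H(Y_A\mid A)-H(Y_B\mid B).
\end{equation*}
Subadditivity of entropy, $H(Y_A,Y_B)\leq H(Y_A)+H(Y_B)$, then bounds this by $I(A;Y_A)+I(B;Y_B)\leq \fCap_m(d)+\fCap_n(d)$. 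Taking the supremum over $(A,B)$ gives $\fCap_{m+n}(d)\leq\fCap_m(d)+\fCap_n(d)$.

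Fekete's lemma needs no extra hypotheses beyond $\fCap_n(d)\geq 0$, which makes the infimum finite; it then yields $\lim_n \frac1n\fCap_n(d)=\inf_n\frac1n\fCap_n(d)$.

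The step I expect to be the real obstacle, if one had to prove it from scratch, is Dobrushin's coding theorem for channels with synchronization errors, $C(d)=\lim_n\frac1n\fCap_n(d)$. Its achievability direction requires an information-stability argument for the non-memoryless block channel. I will take this result as given from~\cite{Dob67}, exactly as the introduction does. With it in hand, the lemma is just the chain of inequalities above.
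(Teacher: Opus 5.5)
Your proposal is correct and follows the same route the paper sketches in its introduction: subadditivity, then Fekete's lemma, then Dobrushin's identification of $\lim_{n}\frac{1}{n}\fCap_n(d)$ with $C(d)$; the paper itself only cites \cite{FD10,Dal11} for the details. Your split-channel and data-processing argument correctly fills in the subadditivity step that the paper leaves implicit, and you rightly apply subadditivity to $\fCap_n(d)$ itself rather than to $\frac{1}{n}\fCap_n(d)$, as the paper's wording suggests.
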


Since this finite-length channel is a DMC, its capacity $\fCap_n(d)$ can be numerically approximated in principle using the Blahut-Arimoto algorithm, provided sufficient computational resources.
A disadvantage of this approach is that we would need, at least at first sight, to restart the computation from scratch if we change the deletion probability $d$.
With this in mind, it is also useful to consider another family of finite-length versions of the BDC, called \emph{exact deletion channels}.

An \emph{exact deletion channel} is parameterized by an input length $n\geq 1$ and another integer $0\leq k\leq n$.
We denote this channel by $\BDC_{n,k}$.
On input $x\in\bits^n$, $\BDC_{n,k}$ outputs a length-$k$ subsequence $y$ of $x$ uniformly at random over all such $\binom{n}{k}$ subsequences.
In other words, $\BDC_{n,k}$ chooses a uniformly random subset of $n-k$ coordinates of $x$ and deletes those bits.
This channel is a DMC with input alphabet $\cX=\bits^n$, output alphabet $\cY=\bits^k$, and channel rule $P_{n,k}$ satisfying
\begin{equation*}
    P_{n,k}(y|x)=\frac{\# \textrm{times $y$ appears as a subsequence of $x$}}{\binom{n}{k}}.
\end{equation*}
Its capacity, denoted $C_{n,k}$, is thus given by
\begin{equation*}
    C_{n,k} = \sup_{X^n} I(X^n; Y),
\end{equation*}
where the supremum is over all distributions $X^n$ supported on $\bits^n$ and $Y$ is the corresponding channel output on input $X^n$.
The values of $C_{n,k}$ for $1\leq k\leq n$ can be used to bound $C_n(d)$ by taking an appropriate convex combination.
\begin{lemma}[\protect{\cite{FD10}}]\label{lem:prob-vs-exact}
    For every $d\in[0,1]$ and all integers $n\geq 1$ and $1\leq k\leq n$ we have
    \begin{equation*}
        C_n(d) \leq \sum_{k=1}^n \binom{n}{k}d^{n-k}(1-d)^kC_{n,k}.
    \end{equation*}
\end{lemma}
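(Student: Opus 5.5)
The plan is to view the finite-length $\BDC_d$ on inputs of length $n$ as a two-stage channel. First an index $K\sim\mathrm{Bin}(n,1-d)$ is drawn independently of the input; then the input is passed through $\BDC_{n,K}$. This decomposition holds because, given that exactly $k$ bits survive, the set of surviving positions is uniform over all $\binom{n}{k}$ subsets. Hence the output distribution conditioned on $K=k$ is exactly $P_{n,k}(\cdot|x)$. Moreover, the output $Y$ determines $K$, since $K=|Y|$.

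Now fix any input distribution $X^n$. Since $K$ is a function of $Y$, we have
\begin{equation*}
I(X^n;Y) = I(X^n;Y,K) = I(X^n;K) + I(X^n;Y\mid K).
\end{equation*}
The first term vanishes because $K$ is independent of $X^n$. For the second term we expand
\begin{equation*}
I(X^n;Y\mid K)=\sum_{k=0}^n \Pr[K=k]\, I(X^n;Y\mid K=k).
\end{equation*}
Conditioned on $K=k$, the pair $(X^n,Y)$ has input marginal equal to the original law of $X^n$ (by independence) and channel $P_{n,k}$. So $I(X^n;Y\mid K=k)$ is the mutual information of $\BDC_{n,k}$ under that input law, and it is therefore at most $C_{n,k}$.

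Substituting $\Pr[K=k]=\binom{n}{k}d^{n-k}(1-d)^k$, we get $I(X^n;Y)\le\sum_{k}\binom{n}{k}d^{n-k}(1-d)^kC_{n,k}$. The $k=0$ term contributes nothing, since the output is then the empty string and $C_{n,0}=0$, so the sum can start at $k=1$. Taking the supremum over $X^n$ gives the bound on $C_n(d)$.

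The main point requiring care is the justification that conditioning on $K=k$ turns the channel into exactly $\BDC_{n,k}$ with unchanged input law. This amounts to the symmetry of i.i.d.\ deletions: each surviving set $S$ of size $k$ has probability $(1-d)^k d^{n-k}$, which depends only on $k$. Everything else is the chain rule for mutual information.
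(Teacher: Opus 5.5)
Your proof is correct and follows the standard argument behind this bound. The paper does not prove the lemma itself but cites \cite{FD10}, where it is derived as you describe: $K=|Y|$ is independent of the input and revealed by the output, and conditioned on $K=k$ the channel is exactly $\BDC_{n,k}$, so $I(X^n;Y)=I(X^n;Y\mid K)=\sum_k\Pr[K=k]\,I(X^n;Y\mid K=k)\le\sum_k\Pr[K=k]\,C_{n,k}$ (with terms having $\Pr[K=k]=0$, e.g.\ when $d\in\{0,1\}$, simply omitted).
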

An advantage of this approach is that once we upper bound $C_{n,k}$ for all $1\leq k\leq n$, we can then easily get upper bounds on $C_n(d)$ for any value of $d$.
We will follow this approach in this work.

Rubinstein and Con~\cite{RC23} used their optimized implementation of the Blahut-Arimoto algorithm to approximate $C_{n,k}$ for $k + n \leq 39$ with $n\leq 28$. However, they were not able to compute $C_{n,k}$ for some values of $k$ when $22 \leq n \leq 28$, due to the high space and time complexity. In the following sections, we will see how to further decrease the time and space complexity of the algorithm specifically for deletion channels.

\subsection{The Blahut-Arimoto algorithm}\label{sec:baa}

The Blahut-Arimoto algorithm (BAA) is a well-known tool for numerically approximating the capacity of finite-input/finite-output DMCs~\cite{Ari72,Bla72}. 
We present the standard formulation of the BAA here, and later show how it can be optimized for the BDC.

A finite-input/finite-output DMC is characterized by a finite input alphabet $\mathcal{X}$, a finite output alphabet $\mathcal{Y}$, and the channel law $P$.
For each $x\in\cX$ and $y\in\cY$ the probability that the channel outputs $y$ on input $x$ is denoted by $P(y|x)$.
The BAA is an iterative procedure for approximating the capacity of any such DMC.
After a prescribed number of iterations, it returns an input distribution $X$.
The corresponding achievable rate $I(X;Y)$ (here $Y$ is the channel output distribution given input $X$) of the input distribution $X$ returned by the BAA is guaranteed to converge to the capacity of the channel as the number of iterations increases.
In fact, we have more precise knowledge about the rate of convergence, as stated in the following result.

\begin{theorem}[{\cite{Ari72}}]\label{claim:baa}
Fix an finite-input/finite-output DMC, and let $C$ be its capacity.
For any threshold $a > 0$, the BAA with $O(1/a)$ iterations
returns an input distribution $X$ whose information rate $I(X;Y)$ satisfies
\[
    C-a\leq I(X;Y)\leq C.
\]
Here, the $O(\cdot)$ notation hides a multiplicative constant that depends only on the choice of the DMC.
Furthermore, the convergence to $C$ is monotonic.
\end{theorem}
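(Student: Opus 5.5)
We plan to prove this through the standard alternating-maximization analysis of the BAA. Write the iterates as input distributions $p^{(t)}$ on $\cX$, starting from the uniform $p^{(0)}$, with the update
\[
p^{(t+1)}(x)=\frac{p^{(t)}(x)\,2^{D(P(\cdot|x)\|q^{(t)})}}{\sum_{x'}p^{(t)}(x')\,2^{D(P(\cdot|x')\|q^{(t)})}},
\qquad q^{(t)}(y)=\sum_x p^{(t)}(x)P(y|x).
\]

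The first step is the variational characterization
\[
C=\max_{p}\max_{\Phi}\sum_{x,y}p(x)P(y|x)\log\frac{\Phi(x|y)}{p(x)},
\]
where the inner maximum is attained at the posterior $\Phi(x|y)=p(x)P(y|x)/q(y)$ and equals $I(p;P)$. Alternating maximization in $\Phi$ and $p$ reproduces exactly the update above. Because each half-step cannot decrease the objective, $I(p^{(t)};P)\le I(p^{(t+1)};P)\le C$, which gives the monotonicity claim.

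For the rate, set $Z_t=\sum_x p^{(t)}(x)2^{D(P(\cdot|x)\|q^{(t)})}$. Two inequalities are needed.

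\begin{itemize}
\item The upper bound $C\le \log Z_t$. This follows from $C\le\max_x D(P(\cdot|x)\|q)$ for every $q$, or more precisely from the key identity below.
\item The lower bound $\log Z_t\le I(p^{(t+1)};P)+D(q^{(t+1)}\|q^{(t)})$, or an analogous relation. This comes from a direct computation that uses the definition of $p^{(t+1)}$.
\end{itemize}

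The central step is Arimoto's telescoping inequality. Let $p^*$ be an optimal input distribution. We will show
\[
C-I(p^{(t+1)};P)\le \sum_x p^*(x)\log\frac{p^{(t+1)}(x)}{p^{(t)}(x)}.
\]
To get this, expand $\log\frac{p^{(t+1)}(x)}{p^{(t)}(x)}=D(P(\cdot|x)\|q^{(t)})-\log Z_t$ and average against $p^*$. This gives
\[
\sum_x p^*(x)D(P(\cdot|x)\|q^{(t)})=C+D(q^*\|q^{(t)})\ge C,
\]
where $q^*$ is the output distribution of $p^*$; it follows from the chain rule for divergence. Combining this with $\log Z_t\le I(p^{(t+1)};P)$, or the corrected form with the divergence term, which we expect to show has the right sign, yields the displayed inequality.

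Summing the telescoping inequality over $t=0,\dots,N-1$ gives
\[
\sum_{t=1}^{N}\bigl(C-I(p^{(t)};P)\bigr)\le D(p^*\|p^{(0)})-D(p^*\|p^{(N)})\le \log|\cX|.
\]
By monotonicity, the last term is the smallest in the sum, so $C-I(p^{(N)};P)\le \log|\cX|/N$. Taking $N=\lceil\log|\cX|/a\rceil=O(1/a)$ then proves the theorem, with a constant that depends only on the channel through $|\cX|$.

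The main obstacle is getting the one-step inequality right, namely relating $\log Z_t$ to $I(p^{(t+1)};P)$ with the correct sign. Our first attempt will rewrite $I(p^{(t+1)};P)$ as
\[
\sum_x p^{(t+1)}(x)D(P(\cdot|x)\|q^{(t)})-D(q^{(t+1)}\|q^{(t)}).
\]
We will then use the definition of $p^{(t+1)}$ together with Jensen's inequality, or equivalently the nonnegativity of $D(p^{(t+1)}\|p^{(t)})$, to show that this quantity dominates $\log Z_t$ up to terms that telescope.
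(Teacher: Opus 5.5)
Your skeleton is Arimoto's original argument (the paper gives no proof of its own, only the citation). You compare with an optimal $p^*$, telescope $\sum_x p^*(x)\log\frac{p^{(t+1)}(x)}{p^{(t)}(x)}$ down to $D(p^*\|p^{(0)})\le\log|\cX|$, and use monotonicity to get $C-I(p^{(N)};P)\le\log|\cX|/N$. The gap is the step you flag as the main obstacle, and the tool you propose for it does not suffice.

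Your identity $\sum_x p^*(x)\log\frac{p^{(t+1)}(x)}{p^{(t)}(x)}=C+D(q^*\|q^{(t)})-\log Z_t$ only controls $C-\log Z_t$. To turn it into a bound on $C-I(p^{(t+1)};P)$ you need $\log Z_t\le I(p^{(t+1)};P)$ with no error term. Expanding exactly gives $I(p^{(t+1)};P)=\log Z_t+D(p^{(t+1)}\|p^{(t)})-D(q^{(t+1)}\|q^{(t)})$. Nonnegativity of $D(p^{(t+1)}\|p^{(t)})$ therefore yields only your weak form $\log Z_t\le I(p^{(t+1)};P)+D(q^{(t+1)}\|q^{(t)})$. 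The leftover $D(q^{(t+1)}\|q^{(t)})$ neither telescopes nor is controlled by $D(q^*\|q^{(t)})$, so summing over $t$ does not close. Jensen applied to $Z_t$ gives $\log Z_t\ge I(p^{(t)};P)$, which is the other direction.

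The missing ingredient is the data-processing inequality $D(q^{(t+1)}\|q^{(t)})\le D(p^{(t+1)}\|p^{(t)})$. It holds because both output laws come from the same channel $P$.

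Equivalently, the fix is already implicit in your monotonicity paragraph. Write $J(p,\Phi)$ for your variational objective and $\Phi^{(t)}$ for the posterior of $p^{(t)}$. Then $\max_p J(p,\Phi^{(t)})=J(p^{(t+1)},\Phi^{(t)})=\log Z_t$, so the $\Phi$-half-step gives $\log Z_t\le\max_\Phi J(p^{(t+1)},\Phi)=I(p^{(t+1)};P)$. The resulting chain $I(p^{(t)};P)\le\log Z_t\le I(p^{(t+1)};P)$ gives both monotonicity and the telescoping inequality, after which your summation goes through.

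Separately, your first bullet is false. Since $\log Z_t\le I(p^{(t+1)};P)\le C$, the claim $C\le\log Z_t$ would force $I(p^{(t+1)};P)=C$ at every step, which fails in general. Also, $C\le\max_x D(P(\cdot|x)\|q^{(t)})$ does not imply it, because $\max_x D(P(\cdot|x)\|q^{(t)})\ge\log Z_t$. Drop that bullet; the telescoping argument never uses it.
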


We now describe the BAA with a conservative stopping criterion.
For a more detailed discussion, see~\cite[Chapter 9]{Yeu08}.
The starting point for the BAA is an initial input distribution $X^{(0)}$.
This distribution may be chosen arbitrarily subject to having full support over $\cX$. 
A common choice is to take $X^{(0)}$ to be the uniform distribution on $\cX$.
Then, for $t\geq 0$, the algorithm proceeds as follows on the $t$-th iteration given $X^{(t)}$:
\begin{enumerate}
    \item \textbf{Compute  channel output distribution of $X^{(t)}$:} This is the distribution $Y^{(t)}$ satisfying
    \begin{equation*}
        Y^{(t)}(y)= \sum_{x\in\cX} X^{(t)}(x) P(y|x)
    \end{equation*}
    for all $y\in\cY$.

    \item \label{step:refined-input}
    
    \textbf{Compute  refined input distribution:}
    This is divided into two steps.
    \begin{enumerate}
        \item We compute the ``unnormalized distribution'' $W^{(t)}$ given by
        \begin{equation*}
            W^{(t)}(x) = \prod_{y\in\cY} \left(\frac{X^{(t)}(x) P(y|x)}{Y^{(t)}(y)}\right)^{P(y|x)}
        \end{equation*}
        for all $x\in\cX$.

        \item We normalize $W^{(t)}$ to get the new input distribution $X^{(t+1)}$.
        That is, we compute
        \begin{equation*}
            X^{(t+1)}(x)= \frac{W^{(t)}(x)}{\sum_{x'\in\cX}W^{(t)}(x')}
        \end{equation*}
        for all $x\in\cX$.
    \end{enumerate}

    \item \textbf{Stopping criterion:} Suppose that we wish to output an input distribution $X$ such that its information rate $I(X;Y)$ satisfies $I(X;Y)\geq C-a$, with $C$ the capacity of the DMC and $a>0$ some approximation threshold.
    Then (e.g., see~\cite[Equation (32)]{Ari72}), it suffices to check whether
    \begin{equation*}
        \max_{x\in\cX} \log\left(\frac{X^{(t+1)}(x)}{X^{(t)}(x)}\right) < a.
    \end{equation*}
    If this condition is satisfied, we stop and return $X=X^{(t+1)}$.
    Otherwise, we move to the next iteration of the BAA.
    
\end{enumerate}

\subsubsection{The optimizations of Rubinstein and Con}\label{sec:rc}

A naive implementation of the BAA requires (i) storing the whole $|\cX|\times|\cY|$ channel transition matrix (that for each $x\in\cX$ and $y\in\cY$ stores $P(y|x)$), and (ii) storing all the values $W^{(t)}(x)$ for $x\in\cX$ in \cref{step:refined-input} of the BAA.
We wish to apply the BAA to numerically approximate the capacities $C_{n,k}$, corresponding to a DMC with input alphabet size $|\cX|=2^n$ and output alphabet size $|\cY|=2^k$.
Therefore, the memory costs of a naive implementation of the BAA quickly become prohibitive as the input length $n$ increases.

Rubinstein and Con~\cite{RC23} developed a more efficient implementation of the BAA for computing the $C_{n,k}$ values through time-memory tradeoffs and by leveraging sparsity of the relevant matrices when $k$ is close to $n$.
Since their methods are also relevant to our optimized implementation of the BAA, we describe them in more detail.
To handle the transition matrix $P$, there are two extremes we can consider:
\begin{itemize}
    \item Pre-compute and store the whole transition matrix (requiring time and space $\Omega(2^{n+k})$). The advantage of this method is that, once this is done, we can retrieve the $P(y|x)$ values in time $O(1)$;

    \item Every time we require $P(y|x)$, compute it from scratch.
    Each such computation can be done in time $\Theta(nk)$ through dynamic programming.
\end{itemize}
Both extremes (high storage/low retrieval costs vs.\ low storage/high retrieval costs) turn out to be too costly even for small values of $n$, due to memory or time constraints.
Rubinstein and Con adopted an approach between the two extremes.
They devise a ``loop-nest'' optimized implementation of the BAA, consisting in cleverly changing the order of operations in the BAA, and combine it with the pre-computation of a smaller table (cache) that can then be used to compute the transition probabilities $P(y|x)$ much faster than the baseline $\Theta(nk)$ time procedure.
To further speed up the application of the BAA, they leverage sparsity when $k$ is close to $n$.

For completeness, \cref{alg:baa} describes the loop-nest optimized implementation of the BAA from \cite{RC23} verbatim with adapted notation.
The method used in \cite{RC23} to balance storage and time requirements for computing the transition probabilities is described verbatim in \cref{alg:pxy}.
We give a more detailed description of how this approach works.
In order to compute the transition probabilities $P_{n,k}(y|x)$ for the $\BDC_{n,k}$, the algorithm leverages only pre-computed tables containing the transition probabilities for the ``smaller'' exact deletion channels $\BDC_{n',k'}$ with $n'\approx n/2$ and $k'\leq k$.
Then, to compute $P_{n,k}(y|x)$ for some input string $x \in \{0,1\}^n$ and output string $y \in \{0,1\}^k$, the algorithm partitions $x$ into two halves, $x_1$ and $x_2$, of lengths $n_1 =\lfloor n/2\rfloor$ and $n_2= \lceil n/2\rceil$, respectively.
Then, it computes $P_{n,k}(y|x)$ based on the pre-computed table by decomposing it as
\begin{equation*}
    P_{n,k}(y|x) = \sum_{k'=0}^k P_{n_1,k'}(y_{[0:k'-1]}|x_1)\cdot P_{n_2,k-k'}(y_{[k':k-1]}|x_2),
\end{equation*}
where $y_{[a:b]}=(y_a,y_{a+1},\dots,y_b)$ and we recall that in this work we index strings starting at $0$, and so write $y=(y_0,y_1,\dots,y_{k-1})$.

This recursive formulation reduces the per-query time complexity to $O(k)$, compared with the naive $O(nk)$ computation, while requiring $O(2^{\frac{n}{2} + k})$ space to store the pre-computed tables. In practice, this trade-off provides a favorable balance between time and space.

\begin{algorithm}
\caption{Loop nest optimized BAA \cite[Algorithm 4 with adapted notation]{RC23}}
\label{alg:baa}
\KwIn{Input/output alphabets $\mathcal{X}, \mathcal{Y}$; channel law $P$; convergence threshold $a > 0$}
\KwOut{Input distribution $X$ with $I(X;Y)=R$}
    $t \gets 0$\;
    Choose $X^{(0)}$ to be the uniform distribution on $\cX$\;

    \Repeat{
        $\displaystyle \max_{x \in \mathcal{X}} \left| \log \left( \frac{X^{(t)}(x)}{X^{(t-1)}(x)} \right) \right| < a$
    }{
        \tcp{Compute output distribution}
        \ForEach{$y \in \mathcal{Y}$}{
            $\displaystyle Y^{(t)}(y) \gets \sum_{x \in \mathcal{X}} X^{(t)}(x) P(y|x)$\;
        }

        \tcp{Compute auxiliary function}
        \ForEach{$x \in \mathcal{X}$}{
            $\displaystyle W^{(t)}(x) \gets \prod_{y\in\cY} \left(\frac{X^{(t)}(x)P(y|x)}{Y^{(t)}(y)}\right)^{P(y|x)}$\;
        }

        \tcp{Update input distribution}
        \ForEach{$x \in \mathcal{X}$}{
            $\displaystyle X^{(t+1)}(x) \gets \frac{W^{(t)}(x)}{\sum_{x'\in\cX}W^{(t)}(x')}$\;
        }

        $t \gets t + 1$\;
    }

    \tcp{Compute information rate}
        $\displaystyle R \gets I(X^{(t)};Y^{(t)})$\;

    \Return{$X^{(t)}, R$}\;
\end{algorithm}

\begin{algorithm}
\caption{Cache-based computation of transition probabilities \cite[Algorithm 3 with adapted notation]{RC23}}
\label{alg:pxy}
\KwIn{Input string $x \in \mathcal{X} = \{0,1\}^n$; output string $y \in \mathcal{Y} = \{0,1\}^k$; cache table of transition probabilities $P_{n',k'}(y'|x')$ for all $n' \leq \lceil n/2 \rceil$ and $k' \leq k$}
\KwOut{Transition probability $P_{n,k}(y|x)$}

    $n_1 \gets \lceil\frac{n}{2}\rceil$\;
    $n_2 \gets \lfloor\frac{n}{2}\rfloor$\;
    
    $x_1 \gets x_{[0:n_1-1]}$
    $x_2 \gets x_{[n_1:n-1]}$ \;
    
    $\displaystyle P_{n,k}(y|x) \gets \sum_{k'=0}^k P_{n_1,k'}(y_{[0:k'-1]}|x_1) \cdot P_{n_2,k-k'}(y_{[k':k-1]}|x_2)$\;

    \Return{$P_{n,k}(y|x)$}\;
\end{algorithm}

\section{An overview of our optimizations}\label{sec:overview}

We provide an optimized implementation of the BAA by leveraging the observation that various steps in an iteration of the BAA lend themselves easily to parallelization.
For example, consider the task of computing the auxiliary function $W^{(t)}(x)$ for all $x\in\cX$ in the BAA applied to the $\BDC_{n,k}$, which we may write equivalently as
\begin{equation*}
    \log W^{(t)}(x)=\sum_{y\in\cY} P_{n,k}(y|x) \log\left(\frac{X^{(t)}(x) P_{n,k}(y|x)}{Y^{(t)}(y)}\right)
\end{equation*}
for increased numerical stability.
First, note that we can compute the values of $ \log W^{(t)}(x)$ for distinct $x$'s in parallel.
Second, for each $x\in\cX$ we can further parallelize the computation by computing each term in the sum over $y\in\cY$ in parallel.
Of course, in practice we only have access to a limited number of parallel threads, but it is not hard to arrange the computations to fit this, for example by assigning to each thread more than one $y$ in the sum.

Motivated by this, we set up parallelized versions of these steps that fit nicely into \emph{CUDA kernels}.\footnote{For an introduction to CUDA, see the CUDA C++ programming guide at https://docs.nvidia.com/cuda/cuda-c-programming-guide/.} 
A CUDA kernel is divided into a grid of blocks, \emph{with each block executing the same function in parallel}.  
Within each block, up to 1024 threads execute concurrently, but this number can also be any smaller power of $2$. 
We now discuss a way of parallelizing this computation using CUDA kernels:
\begin{enumerate}
    \item Assign a distinct input $x\in\cX$ to each block in the kernel;

    \item \label{step:partial-sums} Let $\cS_{x,k}$ denote the set of all length-$k$ subsequences of $x$.
    We partition $\cS_{x,k}$ into up to $1024$ disjoint subsets $\cT_i$ for $i\in\{1,\dots,1024\}$.
    Then, the $i$-th thread of the block corresponding to $x$ computes the partial sum
    \begin{equation*}
        A_i=\sum_{y\in\cT_i}P_{n,k}(y|x) \log\left(\frac{X^{(t)}(x) P_{n,k}(y|x)}{Y^{(t)}(y)}\right),
    \end{equation*}
    where $X^{(t)}(x)$ and $Y^{(t)}(y)$ are already known for all $x\in\cX$ and $y\in\cY$.
    Concretely, if $y_i$ denotes the $i$-th length-$k$ subsequence of $x$ in some pre-specified ordering, then we take $\cT_i=\{i,i+1024,i+2\cdot 1024,\dots\}$.

    \item Compute $\log W^{(t)}(x)=\sum_{i=1}^{1024} A_i$.
\end{enumerate}

Note that our description of our parallelization of the computation of $W^{(t)}(x)$ above is not complete.
We still need to describe how we compute $\cS_i$ and the relevant transition probabilities $P_{n,k}(y|x)$ in \cref{step:partial-sums}.
In other words, we must be able to efficiently enumerate the subsequences of $x$ in $\cS_i$ (to carry out the partial sum), and, for each $y\in\cS_i$, compute the transition probabilities $P_{n,k}(y|x)$.

For computing the transition probabilities $P_{n,k}(y|x)$ we rely on the approach of Rubinstein and Con~\cite{RC23} discussed in \cref{sec:rc} (in particular, see \cref{alg:pxy}).
Namely, we pre-compute the full table of transition probabilities for input lengths $n'\approx n/2$.
Then, each thread in our parallel computation accesses this table to compute $P_{n,k}(y|x)$ for the subsequences $y$ it enumerates over.

For enumerating over the required subsequences, 
a naive approach would be to iterate over all $\binom{n}{k}$ subsets of coordinates of $x$.
However, given the typical memory constraints of GPUs, particularly limited RAM, coupled with the concurrent execution of multiple blocks, it is infeasible to maintain large per-block lookup tables to track which subsequences have already been generated.
Instead, we rely on dynamic programming-based techniques that only require a look-up table of size $O(nk)$ that can also be constructed in time $O(nk)$.
Denote by $N_{x,k}$ the number of length-$k$ subsequences of $x$.
Each thread in the block of the CUDA kernel assigned to input $x$ needs to enumerate over $N\approx \frac{N_{x,k}}{1024}$ length-$k$ subsequences of $x$.
Using our dynamic programming-based method, this can be done in time $O(Nk)$, with small hidden constants.
Since $k\ll 1024$, this yields a significant efficiency improvement over having a single thread enumerate all $N_{x,k}$ subsequences of $x$ in time $\Theta(N_{x,k})$, and so we improve significantly over previous implementations of the BAA.

Other computations in an iteration of the BAA, such as the computation of $Y^{(t)}(y)$ for all $y\in\cY$, can be similarly parallelized.
In this case, we assign each $y\in\cY$ to a different block, and partition the length-$n$ supersequences $x$ of $y$ into up to $1024$ disjoint subsets.
Since the implementation of these ideas is similar to the above, we avoid discussing them further to avoid cluttering the exposition.

We discuss the methods we use to enumerate subsequences and supersequences in more detail in \cref{sec:subseq,sec:superseq}, respectively.
In \cref{app:output-seqs} we discuss an alternative method for computing the channel output probabilities $Y^{(t)}(y)$ via subsequence enumeration, leveraging some simple symmetries of $\BDC_{n,k}$, that is faster for certain values of $k$.

\section{Enumerating subsequences of a given length}
\label{sec:subseq}

In this section, we discuss the method we use to enumerate subsequences.
More precisely, our goal is to, given a string $x\in\bits^n$, a subsequence length $k$, and an integer $j$, return the $j$-th length-$k$ subsequence of $x$ in some arbitrary but pre-specified order.
As discussed in \cref{sec:overview}, our aim is an enumeration algorithm that is well-suited for execution on GPUs, which have limited memory constraints. 
We consider a dynamic programming-based approach that uses a small look-up table.
For completeness, we describe the algorithm and prove its correctness.

Before providing a technical description of the procedure we present the intuition behind it.
The enumeration, or \emph{unranking}, procedure constructs the subsequence $y$ bit-by-bit from left to right.
At each step, we find the earliest possible next occurrence of $0$ and $1$ in $x$ after the last coordinate of $x$ we included in the subsequence. 
All length-$\rem$ completions of $y$ whose next symbol is $0$ are ordered before those whose next symbol is $1$.
We keep track of how many completions lie in each of these two sets, which we call the $0$-set and the $1$-set.
We then compare the current rank $j$ with the size of the $0$-set. If $j$ is at most the size of the $0$-set we append $0$ to $y$; otherwise, we subtract the size of the $0$-set from $j$ and append a $1$ to $y$.
Then, we move on to the next bit of the subsequence.
Repeating this procedure $k$ times constructs the $j$-th length-$k$ subsequence in lexicographic order.

\subsection{Pre-computed tables}

The enumeration is based on two precomputed tables, described below.
In our CUDA implementation of the BAA, each block of the CUDA kernel constructs separate tables (because each block is assigned to a different $x$).
In each block only one thread pre-computes the tables and stores them in dedicated memory.
\begin{itemize}
  \item $\NextPos$: for $0\le i\le n$ and $b\in\{0,1\}$, $\NextPos[i][b]$ stores the smallest index $p\ge i$ with $x_p=b$, or $n$ if no such index exists. 

  The $\NextPos$ table is standard and computed in linear time by scanning from right to left.
  \cref{alg:buildnextpos} describes the procedure we use to construct the $\NextPos$ table.

  \item $\Cnt$:
    for $0\le i\le n$, $0\le t\le k$, $\Cnt[i][t]$ satisfies
    \begin{equation*}
        \Cnt[i][t] =
        \left|\{s\in\{0,1\}^t : s \textrm{ is a subsequence of } x_{[i: n-1]}
        \}\right|.
    \end{equation*}
    We use the boundary conditions $\Cnt[i][0]=1$ for all $i$ (the empty subsequence) and $\Cnt[n][t]=0$ for all $t>0$.

    The $\Cnt$ table is computed recursively using the next-occurrence indices
    \begin{equation*}
        j_0 =\NextPos[i][0],\qquad j_1 = \NextPos[i][1],
    \end{equation*}
    where we recall that $\NextPos[i][b]=n$ indicates that the symbol $b$ does not appear in $x_{[i: n-1]}$. For $t>0$, we have
    \begin{equation*}
        \Cnt[i][t] =
\begin{cases}
0, & \text{if $j_0=n$ and $j_1=n$},\\
\Cnt[j_0+1][t-1], & \text{if $j_0<n$ and $j_1=n$},\\
\Cnt[j_1+1][t-1], & \text{if $j_1<n$ and $j_0=n$},\\
\Cnt[j_0+1][t-1]+\Cnt[j_1+1][t-1], & \text{if $j_0<n$ and $j_1<n$}.
\end{cases}
\end{equation*}

\end{itemize}

\begin{algorithm}
\caption{\textsc{BuildNextPos}($x$)}
\label{alg:buildnextpos}
\KwIn{Binary string $x_{[0: n-1]}$}
\KwOut{$\NextPos[0\ldots n][0\ldots 1]$}
Initialize $\NextPos[i][b]\gets n$ for all $i,b$\;
\For{$i\gets n-1\ \mathrm{downto}\ 0$}{
  $\NextPos[i][0] \gets \NextPos[i+1][0]$\;
  $\NextPos[i][1] \gets \NextPos[i+1][1]$\;
  $\NextPos[i][x[i]] \gets i$\;
}
\Return $\NextPos$
\end{algorithm}

\paragraph{Computational complexity.}
Computing the $\NextPos$ and $\Cnt$ tables requires time and space $O(nk)$.

\subsection{The enumeration algorithm}

We are now in place to describe our unranking algorithm in detail.
The algorithm is described in \cref{alg:unrankk}.
We prove its correctness in this section.

\begin{algorithm}
\caption{$\unrankk(x,k,j)$}
\label{alg:unrankk}
\KwIn{Binary string $x$ of length $n$; $\NextPos$ table; $\Cnt$ table; target length $k$; rank $j$ with $0\le j<\Cnt[0][k]$}
\KwOut{The $j$-th lexicographically smallest distinct subsequence of length $k$}
$i \gets 0$\;
$\rem \gets k$\; 
$\subseq \gets \texttt{empty list}$\; 

\While{$\rem>0$}{
 $j_0 \gets \NextPos[i][0]$\; 
 \tcp{Find the index of the first $0$ at or after position $i$}

 \If{$j_0 < n$}{
   $c_0 \gets \Cnt[j_0+1][\rem-1]$\;
   \tcp{Number of subsequences of x of length $\rem-1$ starting after $j_0$}
 }
 \Else{$c_0 \gets 0$}

 \If{$j < c_0$}{
 append $0$ to $\subseq$\;
 $i \gets j_0+1$\;
 \tcp{Increment index to the position after the chosen $0$}
 $\rem \gets \rem-1$\;
 \textbf{continue}\;
 }

 $j \gets j - c_0$\;

 $j_1 \gets \NextPos[i][1]$\;
 \tcp{Find the index of the first $1$ at or after position $i$}

 \If{$j_1 < n$}{
   $c_1 \gets \Cnt[j_1+1][\rem-1]$\;
   \tcp{Number of subsequences of x of length $\rem-1$ starting after $j_1$}
 }
 \Else{$c_1 \gets 0$}

 \If{$j < c_1$}{
 append $1$ to $\subseq$\;
 $i \gets j_1+1$\;
 \tcp{Increment index to the position after the chosen $1$}
 $\rem \gets \rem-1$\;
 \textbf{continue}\;
 }
}
\Return{$\subseq$}\;
\end{algorithm}

Let $N_{x,k}$ denote the number of length-$k$ subsequences of a string $x$.
The next lemma states, in particular, that the unranking algorithm always returns a length-$k$ subsequence of $x$ when $0\leq j<N_{x,k}$.

\begin{lemma}\label{lem:surj}
If $0\le j<N_{x,k}$, then $\unrankk(x,k,j)$ returns a length-$k$ subsequence of $x$. 
More precisely, at the start of iteration $t$ (i.e., after $t$ bits have been appended), the invariant
\[
0 \le j < \Cnt[i][\rem],\qquad \text{where }\rem=k-t
\]
holds, and $\subseq$ is a subsequence of $x_{[0: i-1]}$.
\end{lemma}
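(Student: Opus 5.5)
We argue by induction on the number $t$ of bits appended so far, maintaining the invariant stated in \cref{lem:surj}. Before the induction we need one preliminary fact: the recursion defining $\Cnt$ really computes the number of distinct length-$t$ subsequences of $x_{[i:n-1]}$. This is the standard ``leftmost embedding'' argument. Every nonempty subsequence $s$ of $x_{[i:n-1]}$ with first symbol $b$ can be embedded greedily with its first symbol at $\NextPos[i][b]$. Hence the length-$t$ subsequences starting with $b$ are in bijection with the length-$(t-1)$ subsequences of $x_{[\NextPos[i][b]+1:n-1]}$, and there are none if $\NextPos[i][b]=n$. The two classes $b=0,1$ are disjoint, so summing gives exactly the case analysis in the definition of $\Cnt[i][t]$. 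The same fact gives the identity
\[
\Cnt[i][\rem] = c_0 + c_1,
\]
where $c_0,c_1$ are computed as in \cref{alg:unrankk} at position $i$. We will use it repeatedly.

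\textbf{Base case} ($t=0$). Here $i=0$ and $\rem=k$. By the preliminary fact, $\Cnt[0][k]=N_{x,k}$, so the hypothesis $0\le j<N_{x,k}$ is exactly the invariant. Also $\subseq$ is empty, which is trivially a subsequence of the empty prefix $x_{[0:-1]}$.

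\textbf{Inductive step.} Assume the invariant at the start of an iteration with $\rem>0$. There are two cases.
\begin{itemize}
\item If $j<c_0$, then $c_0>0$, which forces $j_0<n$. We append $0$ and set $i'=j_0+1$, $\rem'=\rem-1$. The new invariant $0\le j<c_0=\Cnt[j_0+1][\rem-1]$ holds. Since $\subseq$ was a subsequence of $x_{[0:i-1]}$ and $x_{j_0}=0$ with $j_0\ge i$, the extended $\subseq$ is a subsequence of $x_{[0:j_0]}=x_{[0:i'-1]}$.
\item Otherwise $c_0\le j<c_0+c_1$, so the updated value $j-c_0$ satisfies $0\le j-c_0<c_1$. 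Then $c_1>0$ forces $j_1<n$, the second branch is taken, and the same reasoning applies with the bit $1$.
\end{itemize}
In particular, the loop never falls through both \textbf{if} tests without appending a bit, which would otherwise cause nontermination.

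\textbf{Conclusion.} Each iteration decreases $\rem$ by one, so the loop terminates after exactly $k$ iterations and $\subseq$ has length $k$. By the invariant, $\subseq$ is a subsequence of $x_{[0:i-1]}$, and hence of $x$.

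The main obstacle is the preliminary correctness of the $\Cnt$ recursion, specifically the greedy-embedding bijection, which guarantees that $c_0+c_1=\Cnt[i][\rem]$. Once that identity is established, the rest is routine bookkeeping.
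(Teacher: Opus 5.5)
Your proof is correct and follows the same induction as the paper: the base case uses $\Cnt[0][k]=N_{x,k}$, and the inductive step splits on $j<c_0$ versus $j\ge c_0$ using $\Cnt[i][\rem]=c_0+c_1$. The only differences are additions the paper leaves implicit: you justify the $\Cnt$ recurrence via the leftmost-embedding bijection (the paper takes it as part of the table's definition), and you note explicitly that $j<c_0$ forces $j_0<n$ and that the loop can never fall through both tests without appending a bit.
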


\begin{proof}
We prove this statement by induction.

\paragraph{Initialization.} At the start of the execution of the algorithm we have $t=0$, $\rem=k$, $i=0$, and the precondition requires $0\leq j<N_{x,k}=\Cnt[0][k]$.  
The empty string $\subseq$ is trivially a subsequence of the empty prefix.

\paragraph{Inductive step.}  
Assume the invariant holds at the $t$-th iteration, i.e.,
\[
    0 \le j < \Cnt[i][\rem],
    \quad \subseq \subseteq x_{[0: i-1]},
    \quad \rem=k-t>0.
\]
Let
\[
j_0 = \NextPos[i][0], 
\qquad 
c_0 = \Cnt[j_0+1][\rem-1],
\]
with $c_0=0$ if $j_0=n$.  
Similarly, let
\[
j_1 = \NextPos[i][1], 
\qquad 
c_1 = \Cnt[j_1+1][\rem-1],
\]
with $c_1=0$ if $j_1=n$.  
By the recurrence relation for $\Cnt$ we have
\[
\Cnt[i][\rem] = c_0 + c_1.
\]

We proceed by cases.
\begin{enumerate}
    \item \emph{($j<c_0$)} In this case the algorithm appends $0$ at position $j_0$, sets $i\gets j_0+1$, $\rem\gets\rem-1$, and leaves $j$ unchanged.  
    Since $j<c_0=\Cnt[j_0+1][\rem-1]$, we obtain
    \[
    0 \le j < \Cnt[i][\rem],
    \]
    with the updated $i,\rem$. The new subsequence is valid because it extends by a $0$ occurring at $j_0$.

    \item \emph{($j\ge c_0$)}  
In this case we update $j\gets j-c_0$. From the recurrence
\[
\Cnt[i][\rem]=c_0+c_1,
\]
the condition $j<\Cnt[i][\rem]$ implies $j-c_0<c_1$.  
The algorithm appends $1$ at position $j_1$, sets $i\gets j_1+1$, $\rem\gets\rem-1$.  
Thus the invariant becomes
\[
0 \le j < \Cnt[i][\rem],
\]
with the new $i,\rem$, and the subsequence remains valid.
\end{enumerate}

In both cases the invariant is preserved.

\medskip\noindent
\textbf{Termination.}  
Each loop decreases $\rem$ by one. After $k$ iterations we have $\rem=0$ and exactly $k$ bits appended. By the invariant, $\subseq$ is a subsequence of $x$ of length $k$. Thus the algorithm always returns a valid subsequence of length $k$.
\end{proof}

\cref{lem:surj} shows that the unranking algorithm in \cref{alg:unrankk} always returns a length-$k$ subsequence of $x$.
Therefore, we are done if we prove that running the unranking algorithm with $j\neq j'$ yields distinct subsequences.
To prove this we analyze how the state of the algorithm evolves from one iteration to the next.
More precisely, for fixed $i$ and $r>0$ define the ``transition map''
\[
f_{i,r}:\{0,\dots,\Cnt[i][r]-1\}\longrightarrow
\{0,1\}\times\{0,\dots,n\}\times\{0,\dots,r-1\}\times\N
\]
by
\[
f_{i,r}(j)=
\begin{cases}
(0, \NextPos[i][0]+1,r-1, j), & j<c_0,\\
(1, \NextPos[i][1]+1, r-1, j-c_0), & \text{otherwise,}
\end{cases}
\]
where $c_0=\Cnt[\NextPos[i][0]+1][r-1]$ (with $c_0=0$ if $\NextPos[i][0]=n$).

\begin{lemma}\label{lemma:f_injective}
For fixed $i$ and $r>0$ the map $f_{i,r}$ is injective.
\end{lemma}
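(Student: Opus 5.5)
The plan is to take two ranks $j,j'\in\{0,\dots,\Cnt[i][r]-1\}$ with $f_{i,r}(j)=f_{i,r}(j')$ and show that $j=j'$. I will split according to the first coordinate of the common image, since this coordinate records which branch of the definition was used.

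If the first coordinate is $0$, then both $j$ and $j'$ lie in the branch $j<c_0$. In that branch the fourth coordinate of $f_{i,r}(j)$ is $j$ itself. So equality of the fourth coordinates gives $j=j'$ directly.

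If the first coordinate is $1$, then both $j$ and $j'$ lie in the branch $j\ge c_0$. Here the fourth coordinates are $j-c_0$ and $j'-c_0$. The crucial point is that $c_0=\Cnt[\NextPos[i][0]+1][r-1]$ (or $0$ if $\NextPos[i][0]=n$) depends only on $i$ and $r$, not on $j$. So the same constant is subtracted in both cases, and $j-c_0=j'-c_0$ gives $j=j'$.

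Equality of images forces the first coordinates to agree, so it can never happen that $j$ falls in one branch and $j'$ in the other. This means no mixed case needs to be treated. Even without this observation, the two branches have disjoint ranges: the fourth coordinate is in $[0,c_0)$ on the first branch, and the first coordinate already separates them.

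There is no real obstacle. The only point worth stating explicitly is that $c_0$ is a function of $(i,r)$ alone, which makes $j\mapsto j-c_0$ injective on the second branch. The second and third coordinates play no role in the argument. One could also note, via the recurrence $\Cnt[i][r]=c_0+c_1$ used in \cref{lem:surj}, that the map lands in the stated codomain, but this is not needed for injectivity.
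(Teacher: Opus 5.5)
Your proof is correct and is essentially the paper's argument: both use the facts that the first coordinate records the branch and that, within each branch, the fourth coordinate is a translate of $j$ by a constant depending only on $(i,r)$. The only difference is the order of the steps. The paper runs the argument by contradiction: it uses the last coordinate to force $j'=j+c_0$ when $j\neq j'$, and then the first coordinate to show the two images differ. You use the first coordinate to put $j$ and $j'$ in the same branch and then the last coordinate to conclude $j=j'$, which is slightly cleaner.
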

\begin{proof}
Fix any $j\neq j'$.
If $f_{i,r}(j)=f_{i,r}(j')$ then, by the last coordinate of the output and the fact that $j\neq j'$, we may assume without loss of generality that $j'=j+c_0$.
But this implies that $j'\geq c_0$, and so $f_{i,r}(j)$ and $f_{i,r}(j')$ differ in the first coordinate.
\end{proof}

\begin{lemma}\label{lem:glob_injk}
Fix $x\in\bits^n$ and $k\in\{1,\dots,n\}$.
Then, the map $j\mapsto \unrankk(x,k,j)$ for $j \in \{0,\dots, \Cnt[0][k]-1\}$ is injective.
\end{lemma}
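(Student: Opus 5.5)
The plan is to prove the statement by tracking the state $(i,\rem,j)$ of \cref{alg:unrankk} across iterations, and to show that two runs with different initial ranks must produce different output strings. Fix $j\neq j'$ in $\{0,\dots,\Cnt[0][k]-1\}$ and run $\unrankk(x,k,j)$ and $\unrankk(x,k,j')$ side by side. By \cref{lem:surj}, both runs are well defined and last exactly $k$ iterations. At the start of each iteration the state is a triple $(i,\rem,j)$, and one iteration applies exactly the map $f_{i,\rem}$. Its first coordinate is the appended bit, and the remaining three coordinates are the new state. The invariant in \cref{lem:surj} guarantees that the current rank always lies in the domain $\{0,\dots,\Cnt[i][\rem]-1\}$ of $f_{i,\rem}$.

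I will argue by contradiction: suppose both runs return the same string $s\in\bits^k$. I then show by induction on $t$ that the two states stay in a specific form after $t$ iterations. The positions $i$ and the counters $\rem$ agree in the two runs, while the ranks differ. At $t=0$ this holds, since $i=0$, $\rem=k$, and $j\neq j'$. For the inductive step, suppose at step $t$ both runs are in state $(i,\rem,\cdot)$ with distinct ranks $j_t\neq j'_t$. Both runs then apply the same map $f_{i,\rem}$. Because the appended bits agree (both equal $s_t$), the first coordinates of $f_{i,\rem}(j_t)$ and $f_{i,\rem}(j'_t)$ agree. The new position $\NextPos[i][s_t]+1$ and the new counter $\rem-1$ depend only on $(i,\rem,s_t)$, so they agree as well. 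By \cref{lemma:f_injective}, $f_{i,\rem}(j_t)\neq f_{i,\rem}(j'_t)$, so the two outputs must differ in the last coordinate. Hence the new ranks are distinct, which preserves the induction hypothesis.

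After $k$ iterations both runs have $\rem=0$, and by the invariant of \cref{lem:surj} their ranks satisfy $0\le j_k,j'_k<\Cnt[i][0]=1$. This forces $j_k=j'_k=0$, contradicting the distinctness just established. So the outputs differ, and the map is injective.

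The only delicate point is the base of this final contradiction. It relies on the invariant being maintained all the way to $\rem=0$, which \cref{lem:surj} only states "at the start of iteration $t$." I will note that the inductive step of that lemma applies equally to the last iteration, so the bound holds after the final iteration too, using the boundary condition $\Cnt[i][0]=1$.
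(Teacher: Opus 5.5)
Your proof is correct and follows essentially the same route as the paper: assume equal outputs, note that the positions $i_t$ and counters $\rem_t$ then coincide step by step, use injectivity of $f_{i,\rem}$ to propagate $j_t\neq j'_t$, and reach a contradiction at termination via $\Cnt[i_k][0]=1$. Your remark that the invariant of \cref{lem:surj} still holds after the final iteration makes explicit a step the paper uses only implicitly.
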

\begin{proof}
Suppose that $\unrankk(x,k,j)=\unrankk(x,k,j')$ with $0\le j<j'<\Cnt[0][k]$. 
Let $ (i_t,\rem_t,j_t)$ and $ (i'_t,\rem'_t,j'_t)$ denote the states at the start of the $t$-th iteration when running the algorithm with $j$ and $j'$, respectively. 

Since  $\unrankk(x,k,j)=\unrankk(x,k,j')$, the chosen bits, i.e., the bit we add to $\subseq$ at each iteration of the algorithm, are equal.
This means that $i_t=i'_t$ and $\rem_t=\rem'_t$ for all $t$ (the next search index depends only on the previous $i_t$ and the chosen bit). 

We now argue by induction on $t$ that $j_t \neq j'_t$ for all $t$.
For the base case $t=0$, note that $j_0=j<j'_0=j'$ by assumption.
For the induction step, we assume that $j_t\neq j'_t$ and show that $j_{t+1}\neq j'_{t+1}$. 
Because the chosen bit at iteration $t$ is the same in both $\unrankk(x,k,j)$ and $\unrankk(x,k,j')$, and the local transition $f_{i_t,\rem_t}$ is injective by \cref{lemma:f_injective}, the next values satisfy $j_{t+1} \neq j'_{t+1}$. 
By induction, we conclude that $j_k \neq j'_k$.
However, when the algorithm terminates we have $\rem_k=0$ and $\Cnt[i_k][0]=1$, so $j_k=j'_k=0$, a contradiction.
\end{proof}

Combining \cref{lem:surj,lem:glob_injk} immediately yields the following theorem.
\begin{theorem}
For any fixed $x\in\bits^n$ and $k\in\{1,\dots,n\}$, the map
\[
j \mapsto \unrankk(x,k,j)
\]
for $0\le j<\Cnt[0][k]$
is a bijection between $\{0,\dots,N_{x,k}-1\}$ and the set of length-$k$ subsequences of $x$.
\end{theorem}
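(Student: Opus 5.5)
The plan is to combine \cref{lem:surj} and \cref{lem:glob_injk} with a counting identity, namely $\Cnt[0][k]=N_{x,k}$. Once this identity holds, the domain $\{0,\dots,\Cnt[0][k]-1\}$ is exactly $\{0,\dots,N_{x,k}-1\}$. The map $j\mapsto\unrankk(x,k,j)$ sends this domain into the set $\cS_{x,k}$ of length-$k$ subsequences of $x$, by \cref{lem:surj}, and it is injective by \cref{lem:glob_injk}.

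The first step is to verify $\Cnt[i][t]=|\{s\in\bits^t : s \text{ is a subsequence of } x_{[i:n-1]}\}|$ directly from the recurrence. I will argue by induction on $t$, and for fixed $t$ by downward induction on $i$. The base cases $t=0$ (only the empty string) and $i=n$ with $t>0$ (no nonempty subsequence of the empty suffix) hold by the boundary conditions.

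For the inductive step, partition the length-$t$ subsequences $s$ of $x_{[i:n-1]}$ by their first bit $b$. By the greedy leftmost-embedding argument, $s=b s'$ is a subsequence of $x_{[i:n-1]}$ if and only if $b$ occurs in $x_{[i:n-1]}$ and $s'$ is a subsequence of $x_{[\NextPos[i][b]+1:n-1]}$. The reverse implication is immediate. For the forward implication, take any embedding of $s$ and move the position of $b$ to its first occurrence $\NextPos[i][b]$; this still precedes the remaining positions. This gives a bijection between strings starting with $b$ and length-$(t-1)$ subsequences of the shifted suffix, and the four-case recurrence matches exactly. This part is routine and is implicitly assumed already in the proof of \cref{lem:surj}.

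The last step is to conclude. We have an injection from a set of size $N_{x,k}$ into $\cS_{x,k}$, and $|\cS_{x,k}|=N_{x,k}$ by definition, since here $N_{x,k}$ counts distinct subsequences, matching $\Cnt[0][k]$. An injection between finite sets of equal cardinality is a bijection.

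The only potential obstacle is making precise that $N_{x,k}$ means distinct subsequences, which $\Cnt[0][k]$ counts. If one is worried, surjectivity can be argued directly instead: for any $s\in\cS_{x,k}$, construct $j$ by running the algorithm backwards. At each step add $c_0$ if the bit of $s$ is $1$; the greedy-embedding fact guarantees the resulting index stays in range. I would present the cardinality argument, since it is shorter.
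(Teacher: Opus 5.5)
Your proposal is correct and follows the paper's route: the paper also obtains the theorem directly by combining \cref{lem:surj} (the output is a length-$k$ subsequence of $x$) with \cref{lem:glob_injk} (injectivity), using $\Cnt[0][k]=N_{x,k}$. Your leftmost-embedding verification that the $\Cnt$ recurrence counts distinct subsequences is a step the paper takes for granted, so it fills in that detail without changing the argument.
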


\subsection{Computational complexity}

We now discuss the complexity of the subsequence enumeration procedure described in \cref{alg:unrankk}.
As mentioned above, pre-computing the $\NextPos$ and $\Cnt$ tables only needs to be done once, and requires time $O(nk)$.
Afterwards, each execution of the $\unrankk$ algorithm takes time $O(k)$, for any rank $j$.
To see this, note that each iteration of the while cycle in \cref{alg:unrankk} takes time $O(1)$ and decreases $\rem$ by at least $1$.
Since $\rem$ is initially set to $k$, the claim follows.

To see the advantage of combining this method with our parallelization of a BAA iteration, recall from \cref{sec:overview} that each thread in the block of the CUDA kernel associated with input $x$ enumerates over $N\approx \frac{N_{x,k}}{1024}$ length-$k$ subsequences of $x$.
By the previous paragraph, the worst-case running time of a thread is $O(nk) + O(Nk)$, with mild hidden constants.
In particular, since we only consider $k\ll 1024$, this worst-case running time improves significantly over a single thread enumerating over $N_{x,k}$ subsequences in time $O(nk+N_{x,k})$.

\section{Enumerating supersequences of a given length}
\label{sec:superseq}

As discussed in \cref{sec:overview}, we also parallelize the computation of the channel output probabilities $Y^{(t)}(y)$ for every output $y\in\bits^k$.
As a sub-routine of this computation, every thread in the block associated to output $y$ in the parallel implementation must enumerate over a certain subset of length-$n$ supersequences of $y$, ordered in some pre-specified way.
We discuss the methods we use to enumerate these supersequences.
As in \cref{sec:subseq}, our methods are tailored to our parallel architecture.

Before we begin, we note that the number of length-$n$ supersequences of $y$ only depends on the length of $y$.
More precisely, the following holds.

\begin{lemma}[\cite{CS75}, for binary strings]
\label{lem:num_superseq}
For any $y\in\bits^k$, the number of length-$n$ supersequences of $y$ is
\begin{equation*}
    \sum_{i = k}^{n} \binom{n}{i} = \sum_{i=0}^{n-k} \binom{n}{i}.
\end{equation*}
\end{lemma}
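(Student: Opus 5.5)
We will count length-$n$ supersequences $x$ of a fixed $y\in\bits^k$ by assigning to each such $x$ a canonical embedding of $y$: the greedy (leftmost) embedding. In the greedy embedding we scan $x$ from left to right and match $y_0$ to the first occurrence of $y_0$ in $x$, then $y_1$ to the first occurrence of $y_1$ after that position, and so on. This is exactly the procedure encoded by the $\NextPos$ table in \cref{sec:subseq}. Since $y$ is a subsequence of $x$, the greedy procedure succeeds and yields positions $p_0<p_1<\dots<p_{k-1}$. We will show that the number of such $x$ depends only on $k$ and $n$, by showing that $x$ is determined by the positions $p_0<\dots<p_{k-1}$ together with a free choice of the remaining bits that is constrained independently of $y$.

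The key structural fact is that greedy matching forces the unmatched bits. Every position strictly between $p_{t-1}$ and $p_t$ (with $p_{-1}=-1$) must hold the bit $\overline{y_t}=1-y_t$; otherwise greedy would have matched $y_t$ earlier. The positions after $p_{k-1}$ are unconstrained. Conversely, given positions $p_0<\dots<p_{k-1}$ and arbitrary bits after $p_{k-1}$, the string built by placing $y_t$ at $p_t$, filling each gap before $p_t$ with $\overline{y_t}$, and using the chosen tail bits is a supersequence of $y$ whose greedy embedding is exactly these positions. This gives a bijection between supersequences of $y$ and pairs consisting of a position set $(p_0,\dots,p_{k-1})$ and a tail in $\bits^{n-1-p_{k-1}}$.

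Hence the count is $\sum_{m=k-1}^{n-1}\binom{m}{k-1}2^{n-1-m}$, where $m=p_{k-1}$ and $\binom{m}{k-1}$ counts the choices of the earlier positions. This is independent of $y$. To identify it with $\sum_{i=k}^n\binom{n}{i}$, we avoid algebraic manipulation and use the fact that the count does not depend on $y$: evaluate it for $y=0^k$. A string $x\in\bits^n$ contains $0^k$ as a subsequence if and only if it has at least $k$ zeros, so the count is $\sum_{i=k}^n\binom{n}{i}$. The second equality in the statement follows from $\binom{n}{i}=\binom{n}{n-i}$.

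The main point needing care is the converse direction of the bijection. We must check that the constructed string really has its greedy embedding at the prescribed positions, so that distinct data give distinct strings and every supersequence arises exactly once. This goes by induction on $t$: the gap bits before $p_t$ all differ from $y_t$, so the first occurrence of $y_t$ after $p_{t-1}$ is at $p_t$. Also, the tail does not interfere, because greedy has already finished by position $p_{k-1}$.
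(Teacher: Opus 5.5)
Your proof is correct, and it does something the paper does not: the paper gives no proof of this lemma. It imports the count from \cite{CS75} and uses it only to upgrade the injectivity of $\recsuper$ (\cref{lem:decode_superseq}) to a bijection. Your leftmost-embedding decomposition is a self-contained replacement. It shows the count equals $\sum_{m=k-1}^{n-1}\binom{m}{k-1}2^{n-1-m}$, which is independent of $y$, and then specializing to $y=0^k$ sidesteps the binomial identity. You use one fact without comment: greedy succeeds whenever $y$ is a subsequence of $x$. This follows from the usual induction showing that the $t$-th greedy position never exceeds the $t$-th position of any embedding.

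Your route can also be pushed further. Encode the tail not as an arbitrary string but by the set of tail positions whose bit differs from $y_{k-1}$. Let $S$ be this set together with all gap positions. Then the $k$ smallest elements of the complement of $S$ are exactly $p_0,\dots,p_{k-1}$, and $x\mapsto S$ is a bijection onto subsets of $\{0,\dots,n-1\}$ of size at most $n-k$. This proves the lemma directly, with no appeal to $y=0^k$; for $y=0^k$ it is simply the set of positions of ones.

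This bijection is exactly the inverse of $\recsuper$ in \cref{alg:superseq}, provided that positions $p\notin S$ reached after $y$ is exhausted receive $y_{k-1}$ rather than an arbitrary bit. As written, with the choice $0$ and $y_{k-1}=1$, the map is not injective. For $y=1$ and $n=2$, the subsets $\emptyset$ and $\{1\}$ both yield $10$, and $11$ is never produced. The proof of \cref{lem:decode_superseq} silently assumes $c<k$. So your decomposition supplies the missing counting argument and also pins down the tail rule the paper's enumeration needs.
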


We divide our enumeration into two parts.
In the first part we enumerate subsets of $\{1,\dots,n\}$ of a given size $0\leq w\leq n-k$.
Intuitively, these subsets represent the coordinates of bits in the supersequence that are not matched to occurrences of $y$ as a subsequence.
In the second part, we show how to map each such subset to a distinct supersequence of $y$.
By \cref{lem:num_superseq} we cover all supersequences of $y$.

\subsection{Enumerating subsets of unmatched bits}\label{sec:enum-subsets}

We begin by analyzing our procedure for enumerating size-$w$ subsets of $\{1,\dots,n\}$.
Recall that $\binom{n}{\leq t}=\sum_{j=0}^t \binom{n}{j}$.
We use the convention that $\binom{n}{\leq -1}=0$.
For each $i \in \left\{0, \ldots, \binom{n}{\leq n-k}-1 \right\}$, let $w \in \{0,1,\ldots,n-k\}$ be the unique integer such that
\begin{equation*}
    \binom{n}{\leq w-1} \leq i <\binom{n}{\leq w}.
\end{equation*}
We then define $j = i - \binom{n}{\leq w-1}$, which satisfies $0 \leq j < \binom{n}{w}$.

\cref{alg:comb} describes the procedure that, given $n$, $w$, and $j$, returns the $j$-th size-$w$ subset of $\{1,\dots,n\}$ according to a pre-specified order. 
Intuitively, this procedure constructs the subset by scanning the ground set $\{1,\dots,n\}$ from left to right and deciding, at each position $t$, whether $t$ is included in the subset. Conceptually, all size-$w$ subsets are partitioned into two classes: those whose smallest remaining element is $t$, and those whose smallest element is larger than $t$. The quantity $\binom{n-t-1}{w-1}$ counts how many subsets fall into the first block (i.e., those that include $t$). We compare the rank $j$ with this number: if $j$ falls within this block, we include $t$ in the subset and continue recursively with the remaining $w-1$ elements chosen from $\{t+1,\dots,n\}$; otherwise, we skip $t$, subtract this block size from $j$, and continue searching among subsets that only use larger elements. Repeating this process reconstructs the $j$-th size-$w$ subset in lexicographic order of \jnote{confirm} the characteristic vectors.\footnote{The characteristic vector of a set $S\subseteq\{1,2,\dots,n\}$ is the length-$n$ binary vector $v_S$ such that $(v_S)_i=1$ if and only if $i\in S$.}

\begin{algorithm}
\caption{$\subunrank(n,w,j)$}
\label{alg:comb}
\KwIn{Integers $n$, $w$, and $j$, where $0 \leq j < \binom{n}{w}$}
\KwOut{List $\texttt{subset}$ containing the $j$-th size-$w$ subset of $\{1,\ldots, n\}$}
\SetKwFunction{Binom}{Binomial}

\BlankLine
Initialize empty list $\texttt{subset} \gets [\,]$\;
\tcp{Recall that indices start at $0$}

\For{$t \gets 0$ \KwTo $n-1$}{
 \If{$w = 0$}{
 \textbf{break}\;
 }

 $c \gets \binom{n - t - 1}{w - 1}$\;
 \tcp{Number of size-$w$ subsets whose smallest element is $t$}

 \If{$j < c$}{
 Append $t$ to $\texttt{subset}$\;
 $w \gets w - 1$\;
 }
 \Else{
 $j \gets j - c$\;
 }
}
\Return{$\texttt{subset}$}\;
\end{algorithm}

It is clear that \cref{alg:comb} returns a size-$w$ subset of $\{1,\dots,n\}$.
It remains to prove that for fixed $n$ and $w$ different $j$'s yield different subsets. 

\begin{lemma}\label{lem:inj-subunrank}
Fix a sequence $y$ of length $k$. For any integers $n,w$ with $0\leq w\leq n-k$ and any
$0 \leq j < j' < \binom{n}{w}$ we have
\begin{equation*}
    \subunrank(n,w,j) \neq \subunrank(n,w,j').
\end{equation*}
\end{lemma}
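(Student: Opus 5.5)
I plan to mirror the argument used for \cref{lem:glob_injk}: view a run of \cref{alg:comb} as a sequence of states $(t, w_t, j_t)$, where $w_t$ and $j_t$ are the values of $w$ and $j$ at the start of the loop iteration with index $t$. I first establish the invariant $0\le j_t<\binom{n-t}{w_t}$. It holds at $t=0$ by the precondition. At each step, Pascal's rule gives $\binom{n-t}{w_t}=\binom{n-t-1}{w_t-1}+\binom{n-t-1}{w_t}$. If $j_t<c$ we move to $(w_t-1, j_t)$ with $j_t<\binom{n-t-1}{w_t-1}$. Otherwise we move to $(w_t, j_t-c)$ with $0\le j_t-c<\binom{n-t-1}{w_t}$. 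The invariant also shows the output has exactly $w$ elements: once $\binom{n-t}{w_t}$ forces inclusion (when $n-t=w_t$ we get $c=1$ and $j_t=0<c$), every remaining element is taken.

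For injectivity, I suppose $\subunrank(n,w,j)=\subunrank(n,w,j')$ with $j<j'$. Equal outputs mean that at every index $t$ both runs made the same include/skip decision, so $w_t=w'_t$ for all $t$. I then show by induction that $j_t\neq j'_t$. The local update is either $j\mapsto j$ (include) or $j\mapsto j-c$ (skip), with the same $c=\binom{n-t-1}{w_t-1}$ in both runs because $w_t=w'_t$. Since both runs take the same branch, the update is a common translation, so distinctness, and in fact the difference $j'_t-j_t=j'-j>0$, is preserved.

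At termination, either the loop breaks with $w_t=0$ or it ends at $t=n$ with $w_n=0$ by the cardinality argument. In both cases the invariant gives $0\le j_t<\binom{n-t}{0}=1$, so $j_t=j'_t=0$, contradicting $j'_t-j_t=j'-j>0$. This mirrors the ending of \cref{lem:glob_injk}.

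The only delicate point is the boundary behaviour of binomials: the case $w_t-1<0$ cannot occur because of the $w=0$ break, and $\binom{n-t-1}{w_t}=0$ when $w_t>n-t-1$. I will handle these with the standard convention $\binom{a}{b}=0$ for $b>a$ or $b<0$, under which Pascal's rule holds. The hypothesis about $y$ and $k$ plays no role beyond guaranteeing $w\le n$.
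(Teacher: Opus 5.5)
Your argument is correct, but it takes a different route from the paper. The paper proves this lemma by induction on the pair $(w,n)$ in lexicographic order. It locates the first element $t$ (resp.\ $t'$) selected by each run and concludes immediately if $t\neq t'$. Otherwise it observes that the remainder of both runs behaves like a call $\subunrank(n-t-1,w-1,\cdot)$ on the shifted ranks, and applies the induction hypothesis to that smaller instance. You instead mirror the proof of \cref{lem:glob_injk}. You maintain the range invariant $0\le j_t<\binom{n-t}{w_t}$ via Pascal's rule. You then note that equal outputs force identical include/skip decisions, hence identical $w_t$ and identical subtracted amounts $c$, so the gap $j'_t-j_t=j'-j>0$ is conserved. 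Finally, at termination $\binom{n-t}{0}=1$ forces $j_t=j'_t=0$, a contradiction. The paper's approach stays close to the combinatorial picture of grouping subsets by their smallest element. However, it leaves implicit that the reduced ranks lie in the valid range for the recursive call. Read literally, it also recurses on an instance $(n-t-1,w-1)$ that need not satisfy the stated side condition $w-1\le (n-t-1)-k$, so the induction really needs the $k$-free version of the statement, which you correctly identify as all that matters. Your invariant-based proof avoids recursion altogether and handles these range and boundary issues explicitly. It also proves along the way that the output has exactly $w$ elements, which the paper only asserts as clear.
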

\begin{proof}
We prove this result by induction on the pair $(w,n)$ under the lexicographic order.

\paragraph{Base case.}
   If $w=0$ then $\binom{n}{0}=1$, so there is no pair $j<j'$.

\paragraph{Inductive step.}  
Fix $(w,n)$ with $1 \le w < n-k$.  Assume the lemma holds for all pairs $(w',n')$ with
$(w',n') < (w,n)$.
Fix also integers $0 \le j < j' < \binom{n}{w}$.  
The $\subunrank(n,w,j)$ procedure first finds the smallest
$t\in\{1,\dots,n\}$ such that
\begin{equation*}
    \sum_{i=0}^{t-1} \binom{n- i -1}{w-1} \leq j < \sum_{i=0}^{t} \binom{n- i -1}{w-1}.
\end{equation*}
Likewise, $\subunrank(n,w,j')$ finds $t'$.
We consider two cases:
\begin{itemize}
    \item If $t\neq t'$, then the subsets returned by $\subunrank(n,w,j)$ and $\subunrank(n,w,j')$ differ in their smallest element.

    \item If $t=t'$, then both $\subunrank(n,w,j)$ and $\subunrank(n,w,j')$ add $t$ to their subsets. 
    Then, they generate a size-$(w-1)$ subset of $\{t+1,\dots,n\}$, exactly like a call to $\subunrank(n-(t-1),w-1,j_{\rem})$ and $\subunrank(n-(t-1),w-1,j'_{\rem})$, respectively,
    where
    \begin{equation*}
        j_{\rem} = j - \sum_{i=0}^{t-1}\binom{n-i-1}{w-1},
    \quad
    j'_{\rem} = j' - \sum_{i=0}^{t-1}\binom{n-i-1}{w-1}.
    \end{equation*}
    Note that $(w-1,n-t-1)<(w,n)$ in lexicographic order.
Hence, by the induction hypothesis, the two calls
produce two distinct $(w-1)$-size subsets of $\{t+1,\dots,n\}$.  
Prepending the same
element $t$ to each yields two distinct $w$-size subsets of $\{1,\dots,n\}$. \qedhere
\end{itemize}
\end{proof}

\subsection{From subsets to supersequences}

We now analyze a procedure that constructs a length-$n$ supersequence of a length-$k$ string $y$ based on the subset output by $\subunrank$. 
The procedure is described in detail \cref{alg:superseq}.
Intuitively, this procedure constructs a length-$n$ supersequence $x$ of $y$ by first deciding whose coordinates of $x$ match the symbols of $y$, and then filling the remaining positions so that they do not interfere with this matching. 
First, we use the procedure from \cref{sec:enum-subsets} to construct a subset $S$ collecting all coordinates of $x$ that are not used to match $y$.
In particular, this means that we must place the bits of $y$ in the coordinates dictated by the complement $\overline{S}$. 
We then scan $x$ from left to right: whenever the current position is not in $S$, we copy the next symbol of $y$; otherwise, we assign the opposite bit so that this position cannot be used to match $y$. In this way, each choice of subset $S$ yields a unique supersequence, and the enumeration of subsets directly induces an enumeration of all supersequences.

\begin{algorithm}
\caption{$\recsuper(i,n,k,y)$}
\label{alg:superseq}
\KwIn{Index $i$ with $0\le i < P(n-k)$; integers $n\ge k$; string $y\in\{0,1\}^k$.}
\KwOut{The $i$-th length-$n$ supersequence $x$ of $y$ according to some pre-specified order.}

Find unique $w\in\{0,\dots,n-k\}$ such that
$\binom{n}{\leq w-1}\le i < \binom{n}{\leq w}$\;
\tcp{Determine the number of positions not used to match symbols of $y$}

Set $j \gets i -\binom{n}{\leq w-1}$\;
\tcp{Index among the $\binom{n}{w}$ choices of unused positions}

$S \gets \subunrank(n,w,j)$\;
\tcp{Compute the $j$-th size-$w$ subset of $\{0,\dots,n-1\}$}

Initialize $x$ as an $n$-bit vector; set $r\gets 0$\;
\tcp{$r$ indexes the next symbol of $y$ to be matched}

\For{$p=0$ to $n-1$}{
 \If{$p\in S$}{
 \eIf{$r < k$}{
   set $x_p \gets 1-y_r$
 }{
   set $x_p \gets 1-y_{k-1}$
 }
 \tcp{Choose a value that does not match the next symbol of $y$}
 }
 {
 \If{$r < k$}{
   set $x_p \gets y_r$\;
   $r\gets r+1$\;
   \tcp{Match the next symbol of $y$}
 }
 \Else{
   set $x_p$ arbitrarily (e.g., $0$)\;
 }
 }
}
\Return $x$\;
\end{algorithm}

We begin by proving that $\recsuper$ always outputs a length-$n$ supersequence of $y$.
\begin{lemma}\label{lem:correctness}
For every $i$ with $0\le i < \binom{n}{\leq n-k}$, the output $x=\recsuper(i,n,k,y)$ is a supersequence of $y$.
\end{lemma}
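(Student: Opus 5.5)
The plan is to show that the positions $p\notin S$ with $p$ visited while $r<k$ spell out $y$ in order. Since a subsequence only requires some set of indices carrying the symbols of $y$ in order, this suffices. The key quantity is the number of non-$S$ positions, $n-|S| = n-w$. Because $w\le n-k$ by the choice of $w$ in \cref{alg:superseq}, we have $n-w\ge k$.

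First, the setup. By the definition of $w$ and $j$ we have $0\le j<\binom{n}{w}$, so the call $\subunrank(n,w,j)$ is valid. As noted after \cref{alg:comb}, it returns a size-$w$ subset $S$ of the ground set $\{0,\dots,n-1\}$ (the indices start at $0$ in the algorithm).

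Second, I will track the counter $r$ during the loop over $p$. I plan to prove by induction on $p$ the following invariant: after processing positions $0,\dots,p$,
\[
r=\min\{k,\ |\{0,\dots,p\}\setminus S|\},
\]
and, writing $q_0<q_1<\dots<q_{r-1}$ for the first $r$ elements of $\{0,\dots,p\}\setminus S$, we have $x_{q_s}=y_s$ for all $s<r$. The inductive step is a case check. If $p\in S$, then $r$ is unchanged and $x_p$ is irrelevant to the invariant. If $p\notin S$ and $r<k$, the algorithm sets $x_p=y_r$ and increments $r$, so $p$ becomes $q_r$. If $p\notin S$ and $r=k$, nothing relevant changes.

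Third, apply the invariant at $p=n-1$. Since $|\{0,\dots,n-1\}\setminus S|=n-w\ge k$, we get $r=k$ at the end. So there are indices $q_0<\dots<q_{k-1}$ with $x_{q_s}=y_s$, which means $y$ is a subsequence of $x$. Also, $x$ has length $n$ by construction.

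The only mildly delicate point is making sure that $r$ actually reaches $k$, which is exactly where $w\le n-k$ is used. I also need the index convention of $\subunrank$ to match the loop range $p\in\{0,\dots,n-1\}$, so that $|S\cap\{0,\dots,n-1\}|=w$. The values assigned at positions in $S$ (the complements $1-y_r$) play no role in this lemma; they matter only for the later injectivity argument.
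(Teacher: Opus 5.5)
Your proposal is correct and takes the same approach as the paper. The paper's proof also observes that $r$ is incremented exactly at positions outside the size-$w$ set while $r<k$, each such position copying the next symbol of $y$, and then uses $n-w\ge k$ to conclude that $r=k$ when the loop ends. Your invariant $r=\min\{k,|\{0,\dots,p\}\setminus S|\}$ simply makes that counting step precise.
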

\begin{proof}
Let $w$ and $M$ be the integer and subset computed in a call to $\recsuper(i,n,k,y)$.
By construction, $M$ has size $w\le n-k$.
In the for loop we increment $r$ only when $p\not\in M$ and $r<k$, in which case we match the next bit of $y$ to the next bit of $x$.
Since the complement of $M$ has size $n-w\geq k$, when we exit the for loop we have $r=k$, and so we match all bits of $y$ to bits of $x$.
\end{proof}

Now we prove that calling $\recsuper$ on distinct $i$'s yields distinct supersequences of $y$.
\begin{lemma}
\label{lem:decode_superseq}
If $i\neq i'$, then $\recsuper(i,n,k,y)\neq\recsuper(i',n,k,y)$.
\end{lemma}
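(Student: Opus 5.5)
The plan is to show that the supersequence $x=\recsuper(i,n,k,y)$ determines the subset $S$ used to build it. Since distinct indices $i$ give distinct pairs $(w,j)$, and \cref{lem:inj-subunrank} says distinct $j$ give distinct subsets for the same $w$, injectivity follows. Concretely, if $i\neq i'$ with associated $(w,j)$ and $(w',j')$, then either $w\neq w'$, in which case $|S|\neq|S'|$, or $w=w'$ and $j\neq j'$, in which case $S\neq S'$ by \cref{lem:inj-subunrank}. In either case $S\neq S'$. It therefore suffices to prove that $S$ can be recovered from $x$ and $y$.

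The recovery is by the greedy leftmost embedding of $y$ into $x$. I will show that greedily matching $y$ into $x$ from left to right (match $x_p$ to $y_r$ exactly when $x_p=y_r$) reproduces the run of \cref{alg:superseq}. The argument is by induction on $p$, with the invariant that the greedy pointer equals the algorithm's counter $r$ before position $p$.
\begin{itemize}
\item If $r<k$ and $p\in S$, then $x_p=1-y_r\neq y_r$, so greedy does not match and does not advance.
\item If $r<k$ and $p\notin S$, then $x_p=y_r$, so greedy matches and advances, just as the algorithm increments $r$.
\end{itemize}
Hence, while $r<k$, the position $p$ lies in $S$ exactly when greedy fails to match there. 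Let $p^*$ be the position at which greedy completes the embedding of $y$. Then $S\cap\{0,\dots,p^*\}$ is exactly the set of greedy non-matches up to $p^*$, so it is determined by $x$.

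The main obstacle is the tail after $p^*$, where $r=k$. There the algorithm writes $x_p=1-y_{k-1}$ for $p\in S$ and $0$ for $p\notin S$, so $S$ must be read off from these bits. When $y_{k-1}=0$, the two cases write $1$ and $0$ respectively, so membership in $S$ is encoded directly and $S$ is recovered. When $y_{k-1}=1$, both cases write $0$, and recovery appears to fail.

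I would first try a counting argument. Since $|S|=w$ is known once we split by $w$, and the part of $S$ before $p^*$ is recovered, we also know how many elements of $S$ lie in the tail. This still does not determine where they are. So I expect the honest resolution to be that the tail filler must encode membership: the ``arbitrary'' choice for $p\notin S$ should be read as $y_{k-1}$ (the complement of the value used for $p\in S$). This matches the standard Chase construction behind \cref{lem:num_superseq}, in which tail positions carry free bits. With that reading, each tail bit reveals membership in $S$, the map from $S$ to $x$ is injective, and the lemma follows. In writing the proof I would state this interpretation explicitly and note that it is exactly what makes the count in \cref{lem:num_superseq} consistent with the enumeration.
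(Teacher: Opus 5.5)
Your diagnosis of the tail is right, and it exposes an error in the paper rather than in your argument. With filler $0$ for $p\notin S$ once $r=k$, the lemma is false whenever $n>k$ and $y_{k-1}=1$: the subsets $S=\emptyset$ and $S=\{n-1\}$ both produce $y0^{n-k}$. Concretely, for $n=2$, $k=1$, $y=1$, the indices $i=0$ and $i=2$ both return $(1,0)$. The supersequence $(1,1)$ is never produced, so the bijection theorem that follows fails as well.

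The paper's proof misses this at one step. At the first index $\ell$ where $S$ and $S'$ differ, it asserts $x_\ell=1-y_c$ and $x'_\ell=y_c$, with $c=|\{u<\ell:u\notin S\}|$. This count is not capped at $k$. When $c\ge k$, the algorithm actually writes $1-y_{k-1}$ against the filler $0$, and these coincide if $y_{k-1}=1$.

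So do not hedge with ``recovery appears to fail''. State the counterexample outright and say that the filler must be $y_{k-1}$. This choice is forced, not merely convenient: for any $q\ge k$, the subsets $\emptyset$ and $\{q\}$ are treated identically except at the tail position $q$, where one writes the filler and the other writes $1-y_{k-1}$.

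With that correction your proof is sound, and it takes a different route from the paper's. Both arguments start from the same reduction: $i\neq i'$ implies $S\neq S'$, via $|S|=w$ and \cref{lem:inj-subunrank}.

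The paper then argues pairwise. The prefixes before $\ell$ coincide, so the counters agree, and the bits at $\ell$ differ. With your filler this becomes a one-line case split: $1-y_c$ versus $y_c$ if $c<k$, and $1-y_{k-1}$ versus $y_{k-1}$ otherwise.

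You instead construct an explicit left inverse. The greedy leftmost embedding tracks the counter $r$ exactly, so it recovers $S$ up to the $k$-th element of $\overline{S}$, and the tail bits give the rest. This is longer than the pairwise argument. In exchange, it identifies the algorithm's matching with the greedy embedding, which is the structure behind the count in \cref{lem:num_superseq}, and it gives a decoder from supersequences back to indices.
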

\begin{proof}
Let $(w,S,j)$ be the integers and subset first computed by $\recsuper(i,n,k,y)$, and $(w',S',j')$ the integer and subset first computed by $\recsuper(i',n,k,y)$.
Denote the supersequences by these calls by $x$ and $x'$, respectively.

There are two cases to consider.
If $w\neq w'$ then $|S|=w\neq w'=|S'|$, and so $S\neq S'$ in particular.
If $w=w'$ then $j\neq j'$, and so $S=\subunrank(n,w,j)\neq \subunrank(n,w,j')=S'$ by \cref{lem:inj-subunrank}.
Therefore, it is always the case that $S\neq S'$.

Let $\ell$ be the smallest index at which $S$ and $S'$ differ.
By construction, this means that $S\cap\{0,\dots,\ell-1\}=S'\cap\{0,\dots,\ell-1\}$.
Therefore, the construction of the supersequences $x$ and $x'$ up to position $\ell-1$ is identical.
In particular, both have consumed the same number $c$ of symbols from $y$ by position $\ell$, where
\begin{equation*}
    c = \left|\{u<j: u\not\in S\}\right|.
\end{equation*}
At index $\ell$ the two calls behave differently.
Without loss of generality, assume that $\ell\in S$ and $\ell\not\in S'$.
Then,
\begin{equation*}
    x_\ell = 1 - y_c,\quad x'_\ell = y_c,
\end{equation*}
and so $x_\ell\neq x'_\ell$.
\end{proof}

Combining \cref{lem:num_superseq,lem:correctness,lem:decode_superseq} immediately yields the following result.
\begin{theorem}
For any $y\in\{0,1\}^k$ and $n\ge k$ the map
\[
i \mapsto \recsuper(i,n,k,y)
\]
is a bijection between $\{0,\dots,S_{k,n}-1\}$ and the set of length-$n$ supersequences of $y$.
\end{theorem}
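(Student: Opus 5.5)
The plan is to combine the three preceding results, reading $S_{k,n}$ as $\binom{n}{\leq n-k}$, the number of length-$n$ supersequences of $y$ given by \cref{lem:num_superseq}. The first thing to check is that the map is well defined on $\{0,\dots,S_{k,n}-1\}$. The intervals $\bigl[\binom{n}{\leq w-1},\binom{n}{\leq w}\bigr)$ for $w=0,\dots,n-k$ partition $\{0,\dots,\binom{n}{\leq n-k}-1\}$, because the partial sums are strictly increasing and $\binom{n}{\leq -1}=0$. Hence the integer $w$ in the first line of \cref{alg:superseq} exists and is unique. Moreover, $j=i-\binom{n}{\leq w-1}$ satisfies $0\le j<\binom{n}{w}$, which is exactly the precondition of $\subunrank$.

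Given this, \cref{lem:correctness} shows that every output lies in the set of length-$n$ supersequences of $y$, so the map has the claimed codomain. \cref{lem:decode_superseq} then shows the map is injective on its domain.

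Finally, an injective map between finite sets of equal cardinality is a bijection. The domain has $\binom{n}{\leq n-k}=\sum_{i=0}^{n-k}\binom{n}{i}$ elements. By \cref{lem:num_superseq}, the number of length-$n$ supersequences of $y$ is $\sum_{i=0}^{n-k}\binom{n}{i}$, so the two cardinalities agree and surjectivity follows by counting.

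The only potentially delicate step is making sure that \cref{lem:decode_superseq} really covers the case $w\neq w'$ and that the first differing index gives differing bits. In that lemma, $c$ counts the matched symbols consumed before position $\ell$. One should confirm that $c<k$ whenever $\ell\notin S'$, so that $x'_\ell=y_c$ is actually assigned. Otherwise, in the branch where $r\ge k$, the "arbitrary" assignment could coincide with the value $1-y_{k-1}$ placed at positions in $S$. If this gap arises, I would argue it cannot happen: since $|S'|\le n-k$, the positions $\ge\ell$ outside $S'$ must still absorb the remaining symbols. A cleaner fix is to note that the branch $r\ge k$ for $p\notin S$ never occurs when the $n-w$ free positions are exactly filled, which holds only when $w=n-k$. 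If that fails, I would instead rely on the counting argument in reverse: establish surjectivity directly, by decoding a supersequence $x$ via greedy leftmost matching of $y$ into a set $S$, and then conclude injectivity from cardinality.
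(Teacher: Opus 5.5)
Your skeleton is exactly the paper's proof, which just combines the three lemmas: well-definedness of $w,j$, then \cref{lem:correctness} for the codomain, \cref{lem:decode_superseq} for injectivity, and equal cardinalities via \cref{lem:num_superseq}. The problem is that the worry in your last paragraph is a real failure, not a presentational gap. With the tail rule of \cref{alg:superseq} as written (for $p\notin S$ and $r\ge k$, set $x_p$ arbitrarily, e.g.\ $0$), \cref{lem:decode_superseq} is false. The paper's proof of that lemma has precisely the hole you spotted: it writes $x'_\ell=y_c$ without checking $c<k$. Concretely, take $n=2$, $k=1$, $y=1$. The indices $i=0,1,2$ give $S=\emptyset,\{0\},\{1\}$ and outputs $10$, $01$, $10$. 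For $S=\{1\}$, position $1$ lies after the match and receives $1-y_{k-1}=0$, the same bit the ``arbitrary'' rule writes there for $S=\emptyset$. The supersequence $11$ is never produced. More generally, fix any choice $b$ of the arbitrary bit and any $n>k$. If $y_{k-1}=1-b$, then $S=\emptyset$ and $S=\{n-1\}$ both yield $y$ followed by $n-k$ copies of $b$.

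None of your three repairs can close this, since the statement itself fails under that rule. First, the claim that positions $\ge\ell$ outside $S'$ must still absorb remaining symbols is false. When $|S'|<n-k$, $\overline{S'}$ has more than $k$ elements, so the first differing index can lie after all $k$ symbols are matched, i.e.\ $c=k$ (as at $\ell=1$ above). Second, the $w=n-k$ observation covers only one case. Third, surjectivity via greedy decoding cannot be proved, because the map misses $11$. The missing idea is to repair the tail rule so that tail positions encode $S$ injectively: for $p\notin S$ with $r\ge k$, set $x_p=y_{k-1}$, the complement of the value $1-y_{k-1}$ placed at tail positions in $S$. Then at the first differing index $\ell$, with $\ell\in S$ and $\ell\notin S'$, you get $x_\ell=1-y_c\neq y_c=x'_\ell$ if $c<k$, and $x_\ell=1-y_{k-1}\neq y_{k-1}=x'_\ell$ if $c=k$. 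Injectivity follows, and your cardinality argument finishes the proof. Your decoding route also works with this rule: the greedy leftmost embedding of $y$ in $x$ recovers the first $k$ elements of $\overline{S}$, and the tail of $x$ recovers the rest of $S$, giving an explicit inverse.
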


\subsection{Computational complexity}

We now discuss the complexity of the supersequence enumeration procedure described in \cref{alg:superseq}.
Each call to $\subunrank$ takes $O(n)$ time.
The remainder of \cref{alg:superseq} also runs in $O(n)$ time, and so, overall, this algorithm runs in $O(n)$ time.

As in \cref{sec:subseq}, we gain an advantage by combining this method with our parallelization of a BAA iteration.
Each thread in a block associated with output $y\in\bits^k$ enumerates over $N\approx \frac{S_{k,n}}{1024}$ length-$n$ supersequences of $x$.
By the previous paragraph, the running time of a thread is $O(N\cdot n)$, with a mild hidden constant.
Since $n\ll 1024$ in our setting, this improves significantly over a single thread enumerating over $S_{k,n}$ supersequences.

\section{Results}
\label{sec:results}
 
Using our optimized parallelized implementation of the BAA, we managed to compute good upper bounds on $C_{n,k}$ for all pairs $(n,k)$ with $k \le n \leq 29$.
Going even further, we also managed to compute good upper bounds on $C_{31,k}$ for all $k\leq 18$.
For most of the upper bound computations a tolerance of $a=0.005$ was enforced, ensuring that the true value of $C_{n,k}$ exceeds the information rate returned by the BAA by at most $0.005$ (see \cref{claim:baa} and the surrounding discussion in \cref{sec:baa}). 
The only exceptions are in the approximation of $C_{31,k}$ for $14 \leq k \leq 18$, where we enforced a tolerance of $a = 0.05$ due to the rapid increase in time per iteration, reaching 2500 seconds per iteration for $k = 18$.
Reported capacity upper bounds are obtained by adding the respective tolerance value to the rate output by the BAA.

All computations were done with an \textit{RTX 5070 Ti} GPU. For $n = 29$, the computations took at most 1100 iterations to complete, and for $k \approx \frac{n}{2}$ each iteration took approximately 400 seconds, taking 5 days to complete. For $n = 31$, the computations for $k \leq 13$ took less then 500 seconds to complete, taking up to one week to complete, especially for $k = 13$. For $14 \leq k \leq 18$, each iteration took more than 800 seconds to complete, hence the need for a higher tolerance.

The upper bounds on $C_{29,k}$  for all $k\in\{1,\dots,28\}$ are reported in \cref{tab:C_29_k}. 
The upper bounds we obtained on $C_{31,k}$ are reported in \cref{tab:C_31_k}.
To upper bound the $C_{31,k}$ values
for $k>18$ we combine our upper bounds on $C_{n,k}$ for $n\leq 29$ and the following known lemma that upper bounds $C_{n,k}$ based on $C_{n',k'}$ for $n'<n$.

\begin{lemma}[\cite{RC23}]\label{lem:worse-UB}
    For every $s \in \{1,\ldots,n\}$ we have that 
    \begin{equation*}
        C_{n,k} \leq \sum_{i=0}^s \frac{\binom{s}{i}\binom{n-s}{k-i}}{\binom{n}{k}} \left( C_{s,i} + C_{n-s,k-i} \right).
    \end{equation*}
\end{lemma}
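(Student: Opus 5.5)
We prove the bound by simulating $\BDC_{n,k}$ with two independent smaller exact deletion channels and then applying the data-processing inequality and the chain rule. The key observation is that choosing a uniformly random size-$(n-k)$ subset of coordinates of $x\in\bits^n$ to delete is the same as the following two-step process. Split $x=(x_1,x_2)$ with $x_1\in\bits^s$ and $x_2\in\bits^{n-s}$. Draw a random $I\in\{0,\dots,s\}$ with the hypergeometric law
\[
\Pr[I=i]=\frac{\binom{s}{i}\binom{n-s}{k-i}}{\binom{n}{k}}.
\]
Then, conditioned on $I=i$, keep a uniformly random size-$i$ subset of the coordinates of $x_1$ and, independently, a uniformly random size-$(k-i)$ subset of the coordinates of $x_2$. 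Equivalently, pass $x_1$ through $\BDC_{s,i}$ and $x_2$ through $\BDC_{n-s,k-i}$, and output the concatenation $Y=(Y_1,Y_2)$. This works because the number of kept size-$k$ subsets of $[n]$ with exactly $i$ elements in the first $s$ coordinates is $\binom{s}{i}\binom{n-s}{k-i}$, and these are uniform within each class. Note that $I$ is independent of $X^n$.

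Fix any input distribution $X^n=(X_1,X_2)$. Since $Y$ is a function of $(Y_1,Y_2)$, and $I$ is independent of $X^n$, we have
\begin{align*}
I(X^n;Y)&\le I(X^n;Y_1,Y_2,I)\\
&=I(X^n;I)+I(X^n;Y_1,Y_2\mid I)\\
&=\sum_i \Pr[I=i]\, I(X^n;Y_1,Y_2\mid I=i).
\end{align*}

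For each fixed $i$, conditioned on $I=i$ the two channels act independently on $X_1$ and $X_2$, so $(Y_1,Y_2)$ is a product channel applied to $(X_1,X_2)$. The standard argument gives
\begin{align*}
I(X_1X_2;Y_1Y_2)&=H(Y_1Y_2)-H(Y_1\mid X_1)-H(Y_2\mid X_2)\\
&\le I(X_1;Y_1)+I(X_2;Y_2),
\end{align*}
where the inequality uses $H(Y_1Y_2)\le H(Y_1)+H(Y_2)$. The first term is at most $C_{s,i}$ and the second at most $C_{n-s,k-i}$. Taking the supremum over $X^n$ yields the claimed bound.

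The main point requiring care is the exact equality in distribution of the two-stage simulation with $\BDC_{n,k}$, which is the hypergeometric counting identity above, together with the boundary terms. When $k-i<0$ or $k-i>n-s$, the binomial $\binom{n-s}{k-i}$ vanishes, so those summands contribute nothing. For $i=0$ we interpret $C_{s,0}=0$, and similarly $C_{n-s,0}=0$ when $k-i=0$, which is consistent since a length-$0$ output carries no information. The same convention covers the edge case $s=n$, where $\BDC_{0,0}$ is trivial and $C_{0,0}=0$.
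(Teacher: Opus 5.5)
Your proof is correct: the hypergeometric split $I$ is independent of $X^n$, and conditioned on $I=i$ the channel is exactly the product $\BDC_{s,i}\times\BDC_{n-s,k-i}$; revealing $(Y_1,Y_2)$ instead of their concatenation can only increase the mutual information, and $I(X_1X_2;Y_1Y_2)\le I(X_1;Y_1)+I(X_2;Y_2)$ finishes each term. The paper itself gives no proof of this lemma and only cites \cite{RC23}, so there is no in-paper argument to compare against; your genie-aided decomposition is the standard way to establish bounds of this form.
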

More precisely, we set $s=2$ and combine our upper bounds on $C_{29,i}$ reported in \cref{tab:C_29_k} with the easy-to-compute values $C_{2,1}=1$ and $C_{2,2}=2$ for all relevant $i$ and $j$.

We use our upper bounds on the $C_{29,k}$ and $C_{31,k}$ values to obtain upper bounds on the capacity $C(d)$ of the binary deletion channel with deletion probability $d$ via \cref{lem:prob-vs-exact}.
These bounds are reported in \cref{tab:C_d}, where they are also compared to the previous best known upper bounds~\cite{RC23}.
Note that since our upper bounds on $C_{31,k}$ are loose for $k>18$, the upper bounds on $C(d)$ obtained by plugging our upper bounds on $C_{29,k}$ into \cref{lem:prob-vs-exact} are better than those obtained via our upper bounds on $C_{31,k}$ when $d$ is not large.

\paragraph{Capacity upper bounds in the high-noise regime.}
Our improved upper bounds on $C(d)$ also lead to an improved upper bound on $C(d)$ in the asymptotic regime $d\to 1$.
This is obtained via the following result due to Rahmati and Duman~\cite{RD15}.

\begin{lemma}[{\cite{RD15}}]\label{lem:convex}
Let $\lambda, d' \in [0,1]$ and define $d = \lambda d' + 1 - \lambda$. Then,
\[
\frac{C(d)}{1-d} \;\leq\; \frac{C(d')}{1-d'}.
\]
\end{lemma}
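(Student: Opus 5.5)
The plan is to reduce $\BDC_d$ to a concatenation of two deletion channels and use a time-sharing argument to compare rates per surviving bit. Since $1-d=\lambda(1-d')$, the channel $\BDC_d$ is equivalent in distribution to $\BDC_{d'}$ followed by $\BDC_{d''}$ with $d''=1-\lambda$. Indeed, each bit survives both stages independently with probability $(1-d')\lambda=1-d$, and deletions compose as deletions. By the data processing inequality this already gives $C(d)\le C(d')$, but that is weaker than the claim. The real content is the factor $\lambda$, so I would instead prove that
\[
C(d)\le \lambda\, C(d').
\]
Dividing both sides by $1-d=\lambda(1-d')$ then gives exactly the stated inequality.

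The cleanest route I see is a genie-aided argument on the second stage. First, model $\BDC_d$ differently: a genie first applies the i.i.d.\ $\mathrm{Bernoulli}(\lambda)$ marking to the \emph{input} positions, deleting the unmarked ones deterministically. The channel is then "erase unmarked positions, then apply $\BDC_{d'}$ to the marked subsequence". This composition (first $\BDC_{1-\lambda}$, then $\BDC_{d'}$) is also distributionally equal to $\BDC_d$, since deletions commute in law.

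Next, give the marking pattern $M\in\bits^n$ to both encoder and decoder as common randomness independent of the message. Knowing $M$ can only increase capacity, because the genie-aided capacity upper bounds the original. With $M$ known to the encoder, only the $|M|\approx \lambda n$ marked input bits matter; the unmarked ones are wasted. The effective channel is then a $\BDC_{d'}$ acting on a block of length $|M|$, with side information at both ends.

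To bound the mutual information, I would use the finite-length formulation. For a block of length $n$,
\[
I(X^n;Y\mid M)\le \mathbb E\bigl[\fCap_{|M|}(d')\bigr].
\]
Then the subadditivity and limit facts around the first lemma, $\fCap_m(d')/m\to C(d')$ with $\fCap_m(d')\le m+1$ say, combined with concentration $|M|/n\to\lambda$, give
\[
\tfrac1n \mathbb E\bigl[\fCap_{|M|}(d')\bigr]\to \lambda C(d').
\]

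I expect the main obstacle to be justifying rigorously that the capacity of the genie-aided channel upper bounds $C(d)$ in the synchronization setting. This requires checking that Dobrushin's characterization $C(d)=\lim \frac1n \fCap_n(d)$ is compatible with conditioning on $M$. I would handle it by working purely with finite-$n$ quantities. The argument is
\[
\fCap_n(d)=\sup I(X^n;Y)\le \sup I(X^n;Y\mid M),
\]
which holds since $M$ is independent of $X^n$ and $I(X^n;Y)\le I(X^n;Y,M)=I(X^n;Y\mid M)$. For each fixed $M=m$ the channel is exactly the finite-length $\BDC_{d'}$ on $|m|$ bits, so the conditional term is at most $\fCap_{|m|}(d')$. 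Taking $n\to\infty$ then completes the proof.
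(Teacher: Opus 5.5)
Your proof is correct. The paper gives no proof of its own and simply imports the lemma from Rahmati and Duman. Your argument is essentially their channel-fragmentation proof, specialized to the case where one fragment deletes everything: realize $\BDC_d$ as an i.i.d.\ $\mathrm{Bernoulli}(\lambda)$ split into a $\BDC_{d'}$ part and an always-deleting part, reveal the pattern $M$ (independent of $X^n$) to the decoder to get $\fCap_n(d)\le \mathbb{E}[\fCap_{|M|}(d')]$, and let $n\to\infty$. Two small remarks: encoder knowledge of $M$ is never actually used, and the limit needs no concentration, since $\fCap_m(d')\le m(C(d')+\epsilon)+K_\epsilon$ for all $m$ together with $\mathbb{E}|M|=\lambda n$ already gives $\limsup_n \frac{1}{n}\mathbb{E}[\fCap_{|M|}(d')]\le\lambda C(d')$.
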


When $d' = 0.64$, our new upper bound yields
\[
    \frac{C(d')}{1-d'} \leq 0.3578.
\]
Therefore, instantiating \cref{lem:convex} with this upper bound yields
\[
   C(d) \;\leq\; 0.3578 (1-d)
\]
for all $d \geq 0.64$.

\section*{Acknowledgments}

We thank Roni Con for insightful discussions and detailed feedback on earlier versions of this work.

\bibliographystyle{alpha}
\bibliography{refs}

@article{Mit09,
  author    = {Michael Mitzenmacher},
  title     = {A Survey of Results for Deletion Channels and Related Synchronization Channels},
  journal   = {Probability Surveys},
  volume    = {6},
  pages     = {1--33},
  year      = {2009}
}

@article{CR20,
  author    = {Mahdi Cheraghchi and Jo{\~a}o Ribeiro},
  title     = {An Overview of Capacity Results for Synchronization Channels},
  journal   = {IEEE Transactions on Information Theory},
  volume    = {67},
  number    = {6},
  pages     = {3207--3232},
  year      = {2021}
}

@article{FD10,
  author    = {Dario Fertonani and Tolga M. Duman},
  title     = {Novel Bounds on the Capacity of the Binary Deletion Channel},
  journal   = {IEEE Transactions on Information Theory},
  volume    = {56},
  number    = {6},
  pages     = {2753--2765},
  year      = {2010}
}

@inproceedings{Dal11,
  author    = {Marco Dalai},
  title     = {A New Bound on the Capacity of the Binary Deletion Channel with High Deletion Probabilities},
  booktitle = {2011 IEEE International Symposium on Information Theory (ISIT)},
  pages     = {499--502},
  year      = {2011},
  publisher = {IEEE}
}

@INPROCEEDINGS{RC23,
  author={Rubinstein, Ittai and Con, Roni},
  booktitle={2023 IEEE International Symposium on Information Theory (ISIT)}, 
  title={Improved Upper and Lower Bounds on the Capacity of the Binary Deletion Channel}, 
  year={2023},
  volume={},
  number={},
  pages={927-932},
  doi={10.1109/ISIT54713.2023.10206626}}

@article{Bla72,
  author    = {Richard E. Blahut},
  title     = {Computation of Channel Capacity and Rate-Distortion Functions},
  journal   = {IEEE Transactions on Information Theory},
  volume    = {18},
  number    = {4},
  pages     = {460--473},
  year      = {1972}
}

@article{Ari72,
  author    = {Suguru Arimoto},
  title     = {An Algorithm for Computing the Capacity of Arbitrary Discrete Memoryless Channels},
  journal   = {IEEE Transactions on Information Theory},
  volume    = {18},
  number    = {1},
  pages     = {14--20},
  year      = {1972}
}

@article{CS75,
  author    = {V. Chvátal and D. Sankoff},
  title     = {Longest Common Subsequence of Two Random Sequences},
  journal   = {Journal of Applied Probability},
  number    = {12},
  pages     = {306--315},
  year      = {1975}
}

@article{Sha48,
	author = {Claude E. {Shannon}},
	journal = {The Bell System Technical Journal},
	title = {A mathematical theory of communication},
volume = {27},
	number={3},
	pages={379-423},
	year = {1948},
    doi={10.1002/j.1538-7305.1948.tb01338.x}
}

@ARTICLE{HS21,
  author={Haeupler, Bernhard and Shahrasbi, Amirbehshad},
  journal={IEEE Transactions on Information Theory}, 
  title={Synchronization strings and codes for insertions and deletions—A survey}, 
  year={2021},
  volume={67},
  number={6},
  pages={3190-3206},
  doi={10.1109/TIT.2021.3056317}}

@ARTICLE{MBT10, 
author={Hugues {Mercier} and Vijay K. {Bhargava} and Vahid {Tarokh}}, 
journal={IEEE Communications Surveys Tutorials}, 
title={A survey of error-correcting codes for channels with symbol synchronization errors}, 
year={2010}, 
volume={12}, 
number={1}, 
pages={87-96}, 
doi={10.1109/SURV.2010.020110.00079}, 
ISSN={1553-877X}, 
month={First Quarter},
}

@INPROCEEDINGS{DK07,
author={Eleni Drinea and Adam Kirsch},
booktitle={2007 IEEE International Symposium on Information Theory (ISIT)},
title={Directly lower bounding the information capacity for channels with i.i.d. deletions and duplications},
year={2007},
volume={},
number={},
pages={1731-1735},
doi={10.1109/ISIT.2007.4557471},
ISSN={2157-8095},
}

@inproceedings{DMP07,
  title={Capacity upper bounds for the deletion channel},
  author={Diggavi, Suhas and Mitzenmacher, Michael and Pfister, Henry D.},
  booktitle={2007 IEEE International Symposium on Information Theory (ISIT)},
  pages={1716--1720},
  year={2007},
    doi={10.1109/ISIT.2007.4557469}
}

@article{RD15,
  title={Upper bounds on the capacity of deletion channels using channel fragmentation},
  author={Rahmati, Mojtaba and Duman, Tolga M.},
  journal={IEEE Transactions on Information Theory},
  volume={61},
  number={1},
  pages={146--156},
  year={2015},
  publisher={IEEE},
    doi={10.1109/TIT.2014.2368553}
}

@ARTICLE{RA13,
author={Mahdi Ramezani and Masoud Ardakani},
journal={IEEE Transactions on Communications},
title={On the capacity of duplication channels},
year={2013},
volume={61},
number={3},
pages={1020-1027},
doi={10.1109/TCOMM.2013.020413.120070},
}

@article{Mit08,
	title={Capacity bounds for sticky channels},
	author={Mitzenmacher, Michael},
	journal={IEEE Transactions on Information Theory},
	volume={54},
	number={1},
	pages={72--77},
	year={2008},
	publisher={IEEE},
    doi={10.1109/TIT.2007.911291}
}

@ARTICLE{MTL12,
author={Hugues Mercier and Vahid Tarokh and Fabrice Labeau},
journal={IEEE Transactions on Information Theory},
title={Bounds on the capacity of discrete memoryless channels corrupted by synchronization and substitution errors},
year={2012},
volume={58},
number={7},
pages={4306-4330},
doi={10.1109/TIT.2012.2191682},
}

@article{ISW16,
	title={On the capacity of channels with timing synchronization errors},
	author={Iyengar, Aravind R. and Siegel, Paul H. and Wolf, Jack Keil},
	journal={IEEE Transactions on Information Theory},
	volume={62},
	number={2},
	pages={793--810},
	year={2016},
	publisher={IEEE},
    doi={10.1109/TIT.2015.2504358}
}

@INPROCEEDINGS{CK15, 
author={Jason {Castiglione} and Aleksandar {Kav\v{c}i\'{c}}}, 
booktitle={2015 IEEE Information Theory Workshop - Fall (ITW)}, 
title={Trellis based lower bounds on capacities of channels with synchronization errors}, 
year={2015}, 
volume={}, 
number={}, 
pages={24-28}, 
doi={10.1109/ITWF.2015.7360727}, 
ISSN={}, 
month={Oct},}

@techreport{Gal61,
  title={Sequential decoding for binary channels with noise and synchronization errors},
  author={Gallager, Robert G.},
  year={1961},
  institution={MIT Lexington Lincoln Laboratory}
}

@article{Zig69,
  title={Sequential decoding for a binary channel with drop-outs and insertions},
  author={Zigangirov, Kamil Sh.},
  journal={Problemy Peredachi Informatsii},
  volume={5},
  number={2},
  pages={23--30},
  year={1969},
  publisher={Russian Academy of Sciences, Branch of Informatics, Computer Equipment and~…},
}

@article{VD68,
  title={The computation on a computer of the channel capacity of a line with symbol drop-out},
  author={Vvedenskaya, Nikita D. and Dobrushin, Roland L.},
  journal={Problemy Peredachi Informatsii},
  volume={4},
  number={3},
  pages={92--95},
  year={1968},
  publisher={Russian Academy of Sciences, Branch of Informatics, Computer Equipment and~…},
    url={https://www.mathnet.ru/rus/ppi1865}
}

@ARTICLE{DG06, 
author={Suhas {Diggavi} and Matthias {Grossglauser}}, 
journal={IEEE Transactions on Information Theory}, 
title={On information transmission over a finite buffer channel}, 
year={2006}, 
volume={52}, 
number={3}, 
pages={1226-1237}, 
doi={10.1109/TIT.2005.864445},  
month={March},}

@article{DM07,
	title={Improved lower bounds for the capacity of iid deletion and duplication channels},
	author={Drinea, Eleni and Mitzenmacher, Michael},
	journal={IEEE Transactions on Information Theory},
	volume={53},
	number={8},
	pages={2693--2714},
	year={2007},
	publisher={IEEE},
    doi={10.1109/TIT.2007.901221}
}

@article{VTR13,
	title={Achievable rates for channels with deletions and insertions},
	author={Venkataramanan, Ramji and Tatikonda, Sekhar and Ramchandran, Kannan},
	journal={IEEE Transactions on Information Theory},
	volume={59},
	number={11},
	pages={6990--7013},
	year={2013},
	publisher={IEEE},
    doi={10.1109/TIT.2013.2278181}
}

@article{Che19,
 author = {Cheraghchi, Mahdi},
 title = {Capacity upper bounds for deletion-type channels},
 journal = {J. ACM},
 issue_date = {April 2019},
 volume = {66},
 number = {2},
 month = mar,
 year = {2019},
 pages = {9:1--9:79},
 articleno = {9},
 numpages = {79},
 doi = {10.1145/3281275},
 acmid = {3281275},
 publisher = {ACM},
 address = {New York, NY, USA},
}

@ARTICLE{CR19b, 
author={Mahdi {Cheraghchi} and Jo{\~a}o {Ribeiro}}, 
journal={IEEE Transactions on Information Theory}, 
title={Sharp analytical capacity upper bounds for sticky and related channels}, 
year={2019}, 
volume={65}, 
number={11}, 
pages={6950-6974}, 
doi={10.1109/TIT.2019.2920375}, 
month={Nov},}

@article{MD06,
	title={A simple lower bound for the capacity of the deletion channel},
	author={Mitzenmacher, Michael and Drinea, Eleni},
	journal={IEEE Transactions on Information Theory},
	volume={52},
	number={10},
	pages={4657--4660},
	year={2006},
	publisher={IEEE},
    doi={10.1109/TIT.2006.881844}
}

@ARTICLE{KD10, 
author={Adam {Kirsch} and Eleni {Drinea}}, 
journal={IEEE Transactions on Information Theory}, 
title={Directly lower bounding the information capacity for channels with i.i.d. deletions and duplications}, 
year={2010}, 
volume={56}, 
number={1}, 
pages={86-102}, 
doi={10.1109/TIT.2009.2034883}, 
ISSN={}, 
month={Jan},}

@article{YGM17,
	title={Portable and error-free {DNA}-based data storage},
	author={Yazdi, S. M. Hossein Tabatabaei and Gabrys, Ryan and Milenkovic, Olgica},
	journal={Scientific reports},
	volume={7},
	number={1},
	pages={5011},
	year={2017},
	publisher={Nature Publishing Group},
    doi={10.1038/s41598-017-05188-1}
}

@article{OAC+18,
	title={Random access in large-scale {DNA} data storage},
	author={Organick, Lee and Ang, Siena Dumas and Chen, Yuan-Jyue and Lopez, Randolph and Yekhanin, Sergey and Makarychev, Konstantin and Racz, Miklos Z and Kamath, Govinda and Gopalan, Parikshit and Nguyen, Bichlien and others},
	journal={Nature Biotechnology},
	volume={36},
	number={3},
	pages={242},
	year={2018},
	publisher={Nature Publishing Group},
    doi={10.1038/nbt.4079}
}

@article{HMG19,
  title={A characterization of the DNA data storage channel},
  author={Heckel, Reinhard and Mikutis, Gediminas and Grass, Robert N.},
  journal={Scientific reports},
  volume={9},
  number={1},
  pages={9663},
  year={2019},
  publisher={Nature Publishing Group UK London},
    doi={10.1038/s41598-019-45832-6}
}

@INPROCEEDINGS{WGGH23,
  author={Weindel, Franziska and Gimpel, Andreas L. and Grass, Robert N. and Heckel, Reinhard},
  booktitle={2023 59th Annual Allerton Conference on Communication, Control, and Computing (Allerton)}, 
  title={Embracing errors is more effective than avoiding them through constrained coding for {DNA} data storage}, 
  year={2023},
  volume={},
  number={},
  pages={1-8},
  doi={10.1109/Allerton58177.2023.10313494}}

@article{Dob67,
  title={Shannon's theorems for channels with synchronization errors},
  author={Dobrushin, Roland L.},
  journal={Problemy Peredachi Informatsii},
  volume={3},
  number={4},
  pages={18--36},
  year={1967},
  publisher={Russian Academy of Sciences, Branch of Informatics, Computer Equipment and~…},
    url={https://www.mathnet.ru/eng/ppi1919}
}

@book{Yeu08,
  title={Information Theory and Network Coding},
  author={Yeung, Raymond W.},
  year={2008},
  publisher={Springer Science \& Business Media}
}

@misc{repo,
  author = {Martim Pinto},
  title = {{GPU} code for computing capacity upper bounds for the binary deletion channel},
  year = {2026},
  howpublished = {\url{https://doi.org/10.5281/zenodo.19453272}},
}

\newpage

\begin{table}
    \begin{minipage}[t]{.5\linewidth}
        \centering
        \footnotesize
        \caption{Upper bounds on $C_{29,k}$.
        }
        \label{tab:C_29_k}
        \begin{tabular}{c|c}
        \hline
        $k$ & Upper bound on $C_{29,k}$ \\
        \hline
        1  & 1.0000 \\
        2  & 1.1898 \\
        3  & 1.5165 \\
        4  & 1.8137 \\
        5  & 2.0916 \\
        6  & 2.3761 \\
        7  & 2.6647 \\
        8  & 2.9598 \\
        9  & 3.2663 \\
        10 & 3.5867 \\
        11 & 3.9247 \\
        12 & 4.2841 \\
        13 & 4.6727 \\
        14 & 5.0930 \\
        15 & 5.5573 \\
        16 & 6.0713 \\
        17 & 6.6464 \\
        18 & 7.2964 \\
        19 & 8.0364 \\
        20 & 8.8850 \\
        21 & 9.8659 \\
        22 & 11.0096 \\
        23 & 12.3545 \\
        24 & 13.9487 \\
        25 & 15.8526 \\
        26 & 18.1454 \\
        27 & 20.9387 \\
        28 & 24.4132 \\
        29 & 29.0000 \\
        \hline
    \end{tabular}
\end{minipage}
    \begin{minipage}[t]{.5\linewidth}
        \centering
        \caption{
        Upper bounds on $C_{31,k}$.}
        \label{tab:C_31_k}
        \footnotesize
        \begin{tabular}{c|c}
        \hline
        $k$ & Upper bound on $C_{31,k}$ \\
        \hline
        1  & 1.0000 \\
        2  & 1.1888 \\
        3  & 1.5136 \\
        4  & 1.8086 \\
        5  & 2.0831 \\
        6  & 2.3632 \\
        7  & 2.6458 \\
        8  & 2.9332 \\
        9  & 3.2298 \\
        10 & 3.5384 \\
        11 & 3.8605 \\
        12 & 4.2005 \\
        13 & 4.5623 \\
        14 & 4.9511 \\
        15 & 5.3895 \\
        16 & 5.8826 \\
        17 & 6.3947 \\
        18 & 6.9592 \\
        19 & 8.3882 \\
        20 & 9.1247 \\
        21 & 9.9515 \\
        22 & 10.8865 \\
        23 & 11.9522 \\
        24 & 13.1766 \\
        25 & 14.5945 \\
        26 & 16.2486 \\
        27 & 18.1927 \\
        28 & 20.4969 \\
        29 & 23.2603 \\
        30 & 30.0000 \\
        31 & 31.0000 \\
        \hline
        \end{tabular}
    \end{minipage}

\end{table}

\begin{table}[]
    \centering
    \caption{
    Upper bounds on $C(d)$ obtained by combining \cref{lem:prob-vs-exact} with the upper bounds on $C_{29,k}$ from \cref{tab:C_29_k} or the upper bounds on $C_{31,k}$ from \cref{tab:C_31_k}, and taking the minimum between these two.
    For each value of $d$, we list which $n$ ($29$ or $31$) gave the best upper bound.
    We also list the previous best known upper bounds~\cite{RC23} for comparison.}
    \label{tab:C_d}
    \begin{tabular}{c|c|c}
        \hline
        $d$ & New upper bound& Previous upper bound~\cite{RC23} \\
        \hline
        0.01 & 0.9557 ($n=29$) & 0.9583 \\
        0.02 & 0.9141 ($n=29$) & 0.9189 \\
        0.03 & 0.8751 ($n=29$) & 0.8817 \\
        0.04 & 0.8385 ($n=29$) & 0.8467 \\
        0.05 & 0.8039 ($n=29$) & 0.8139 \\
        0.10 & 0.6577 ($n=29$) & 0.6762 \\
        0.15 & 0.5454 ($n=29$) & 0.5660 \\
        0.20 & 0.4574 ($n=29$) & 0.4786 \\
        0.25 & 0.3876 ($n=29$) & 0.4083 \\
        0.30 & 0.3314 ($n=29$) & 0.3513 \\
        0.35 & 0.2857 ($n=29$) & 0.3045  \\
        0.40 & 0.2480 ($n=29$) & 0.2648   \\
        0.45 & 0.2164 ($n=29$) &  0.2309   \\
        0.50 & 0.1896 ($n=29$) &   0.2015   \\
        0.55 & 0.1652 ($n=31$) &    0.1755   \\
        0.60 & 0.1438 ($n=31$) &    0.1524   \\
        0.64 & 0.1288 ($n=31$) & --\\
        0.65 & 0.1253 ($n=31$) &     0.1313   \\
        0.68 & 0.1151 ($n=31$) &      0.1199   \\

        \hline
    \end{tabular}
\end{table}

\newpage

\appendix

\section{Computing channel output probabilities using subsequences}\label{app:output-seqs}

In this section we present an alternative method to compute the channel output probabilities $Y^{(t)}(y)$ of the $\BDC_{n,k}$ in an iteration of the BAA using subsequence enumeration and exploiting symmetries of the $\BDC_{n,k}$ and the input distributions obtained through the BAA.
We found this method to be faster in practice than the method discussed in \cref{sec:superseq} for approximating $C_{n,k}$ using the BAA when $k$ is close to $n/2$.
We note that the idea of using channel symmetries to speed up computations is not new:
it was used before in the optimized implementation of the BAA in~\cite{RC23}, although it was not analyzed in the paper itself.
We exploit these symmetries in a somewhat different way, and we provide an analysis for completeness.

We begin by providing some intuition behind the method.
Recall that in \cref{sec:overview} we computed $Y^{(t)}(y)$ for all $y\in\bits^k$ in parallel by assigning each block in the CUDA kernel to a different output $y\in\bits^k$, and then having threads within that block enumerate over appropriate subsets of length-$n$ supersequences of $y$ using the method from \cref{sec:superseq}.
Alternatively, we can also compute $Y^{(t)}(y)$ for all $y\in\bits^k$ by maintaining an array (with all entries initialized to $0$) storing $Y^{(t)}(y)$ for all $y$.  
Then,
instead we assign each block in the CUDA kernel to a different \emph{input} $x\in\bits^n$, and have threads within that block enumerate over appropriate subsets of length-$k$ subsequences of $x$ using the method from \cref{sec:superseq}, updating the array storing $Y^{(t)}$ as they go along.

This approach can be sped up by taking into account some simple symmetries of the $\BDC_{n,k}$.
In more detail, for an arbitrary integer $n\geq 1$ and a string $x\in\bits^n$ let $\cpl(x)$ denote its coordinate-wise complement (so that $\cpl(x)_i=1-x_i$) and $\rev(x)$ denote its reversal (so that $\rev(x)_i=x_{n-1-i}$), where we write $x=(x_0,\dots,x_{n-1})$.
Note that these functions are their own inverses.
Then, for $g=\cpl$ or $g=\rev$ and any $x\in\bits^n$ and $y\in\bits^k$, we have
\begin{equation}\label{eq:BDC-sym}
    P_{n,k}(g(y)|g(x))=P_{n,k}(y|x).
\end{equation}
The same extends directly to the composition $\cpl\circ\rev$.
Using \cref{eq:BDC-sym}, we can derive analogous symmetries of the input and output distributions produced by the various iterations of the BAA.
\begin{theorem}
\label{thm:baa-symmetry}
Suppose the BAA applied to the $\BDC_{n,k}$ is initialized with a uniform input distribution $X^{(0)}$.
Then, for every $t\geq 1$, $x\in\bits^n$, $y\in\bits^k$, and $g=\cpl$ or $g=\rev$ we have
\begin{equation*}
    X^{(t)}(g(x))=X^{(t)}(x)
\end{equation*}
and
\begin{equation*}
    Y^{(t)}(g(y))=Y^{(t)}(y).
\end{equation*}
\end{theorem}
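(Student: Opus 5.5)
We argue by induction on $t$, proving the symmetry of $X^{(t)}$ for every $t\geq 0$ (the case $t=0$ is trivial since $X^{(0)}$ is uniform) and deducing the symmetry of $Y^{(t)}$ from that of $X^{(t)}$ at each step. Throughout, fix $g\in\{\cpl,\rev\}$. We use two facts: $g$ is an involution, hence a bijection on $\bits^n$ and on $\bits^k$; and \cref{eq:BDC-sym} holds, i.e., $P_{n,k}(g(y)|g(x))=P_{n,k}(y|x)$. The argument is the same for both choices of $g$ (and for $\cpl\circ\rev$). Both sides of \cref{eq:BDC-sym} count embeddings of $y$ into $x$, and $g$ maps embeddings to embeddings bijectively (for $\rev$, an increasing index set $\{i_1<\dots<i_k\}$ is sent to $\{n-1-i_k<\dots<n-1-i_1\}$), so the identity itself needs no further work.

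\textbf{Step 1: output distribution.} Assume $X^{(t)}(g(x))=X^{(t)}(x)$ for all $x$. Then
\[
Y^{(t)}(g(y))=\sum_{x}X^{(t)}(x)P_{n,k}(g(y)|x)=\sum_{x'}X^{(t)}(g(x'))P_{n,k}(g(y)|g(x'))=\sum_{x'}X^{(t)}(x')P_{n,k}(y|x')=Y^{(t)}(y).
\]
Here we reindex by $x=g(x')$, which is valid because $g$ is a bijection, and then apply \cref{eq:BDC-sym} together with the induction hypothesis.

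\textbf{Step 2: refined input.} Work with $\log W^{(t)}(x)=\sum_{y}P_{n,k}(y|x)\log\bigl(X^{(t)}(x)P_{n,k}(y|x)/Y^{(t)}(y)\bigr)$, using the convention $0\log 0=0$ for terms with $P_{n,k}(y|x)=0$. Substitute $g(x)$ for $x$ and reindex $y=g(y')$. Then \cref{eq:BDC-sym}, the hypothesis on $X^{(t)}$, and Step 1 give $W^{(t)}(g(x))=W^{(t)}(x)$. The normalizing constant $\sum_{x'}W^{(t)}(x')$ does not depend on $x$, so $X^{(t+1)}(g(x))=X^{(t+1)}(x)$. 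This closes the induction.

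The only step needing care is the bookkeeping in Step 2: applying the same bijective reindexing in both the product and the exponents, and treating zero-probability terms consistently. Beyond that the argument is routine. As a result, the statement holds for all $t\geq 0$, and in particular for $t\geq 1$ as claimed.
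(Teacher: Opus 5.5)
Your proof is correct and follows essentially the same route as the paper: induction on $t$ from the uniform base case, first deriving $Y^{(t)}(g(y))=Y^{(t)}(y)$ by the bijective reindexing $x=g(x')$ together with \cref{eq:BDC-sym}, then showing $W^{(t)}(g(x))=W^{(t)}(x)$ by reindexing $y=g(y')$, and concluding by normalization. The differences are only cosmetic: you work with $\log W^{(t)}$ rather than the product form, and you explicitly run the induction step for all $t\geq 0$, which is what the base case $t=0$ requires; the paper's ``fix $t\geq 1$'' is a minor slip.
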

\begin{proof}
    We established the desired statement by induction in $t$.
    For brevity, we write $P=P_{n,k}$.
    
    The base case $t=0$ is clear since $X^{(0)}$ is uniform over $\bits^n$.
    Now fix $t\geq 1$ and suppose that $X^{(t)}(g(x))=X^{(t)}(x)$ for all $x\in\bits^n$.
    We show that the same thing holds for $X^{(t+1)}$.
    First, note that
    \begin{align}
        Y^{(t)}(g(y)) &= \sum_{x\in\bits^n}X^{(t)}(x) P(g(y)|x)\nonumber\\
        &= \sum_{x\in\bits^n}X^{(t)}(g(x)) P(y|g(x))\nonumber\\
        &= \sum_{x'\in\bits^n}X^{(t)}(x') P(y|x')\nonumber\\
        &=Y^{(t)}(y)\label{eq:inv-Y}
    \end{align}
    for all $y\in\bits^k$.
    The second equality uses the induction hypothesis.
    The third equality uses the fact that $g$ is a bijection.
    Then,
    \begin{align*}
        W^{(t)}(g(x)) &= \prod_{y\in\bits^k}\left(\frac{X^{(t)}(g(x)) P(y|g(x))}{Y^{(t)}(y)}\right)^{P(y|g(x))}\\
        &=\prod_{y\in\bits^k}\left(\frac{X^{(t)}(x) P(g(y)|x)}{Y^{(t)}(g(y))}\right)^{P(g(y)|x)}\\
        &=\prod_{y'\in\bits^k}\left(\frac{X^{(t)}(x) P(y'|x)}{Y^{(t)}(y')}\right)^{P(y'|x)}\\
        &=W^{(t)}(x).
    \end{align*}
    The second equality uses the induction hypothesis and \cref{eq:inv-Y}.
    Since $X^{(t+1)}$ is obtained by normalizing $W^{(t)}$ the desired result follows.
    \qedhere
\end{proof}

We can use \cref{thm:baa-symmetry} to slightly simplify the computation of $(Y^{(t)}(y))_{y\in\bits^k}$.
Given a string $y\in\bits^k$, we define its orbit $\cO_y=\{y,\cpl(y),\rev(y),\rev\circ \cpl(y)\}$.
To each orbit $\cO_y$ we associate as its \emph{representative} the smallest string $y'\in\cO_y$ with respect to the lexicographic order, and denote it by $\rep(y)$.
We denote the set of all representatives in $\bits^k$ by $\cR_k$.

By \cref{thm:baa-symmetry}, it suffices to compute $Y^{(t)}(r)$ for all representatives $r\in\cR_k$.
Furthermore, we have
\begin{align*}
    Y^{(t)}(r) &= \sum_{x\in\bits^n} X^{(t)}(x) P_{n,k}(r|x)\\
    &= \sum_{x\in \cR_n} X^{(t)}(x) \sum_{x'\in \cO_x} P_{n,k}(r|x')\\
    &=\sum_{x\in \cR_n} X^{(t)}(x)\cdot \frac{|\cO_x|}{|\cO_r|} \sum_{y\in\cO_r}P_{n,k}(y|x).
\end{align*}
The second equality uses \cref{thm:baa-symmetry}.
To prove the last equality, we can, for example, use a group-theoretic argument.
Let $G$ be the group generated by $\cpl$ and $\rev$ via composition of functions.
For a binary string $z$, define the \emph{stabilizer} $\Stab_z=\{g\in G: g(z)=z\}$.
Note that $g^{-1}=g$ for every $g\in G$, and that $\cO_z$ is the orbit of $z$ under the action of $G$.
Then,
\begin{align*}
    \sum_{x'\in \cO_x} P_{n,k}(r|x') &= \frac{1}{|\Stab_x|}\sum_{g\in G} P_{n,k}(r|g(x))\\
    &= \frac{1}{|\Stab_x|}\sum_{g\in G} P_{n,k}(g(r)|x)\\
    &= \frac{|\Stab_r|}{|\Stab_x|}\sum_{y\in\cO_r} P_{n,k}(y|x)\\
    &= \frac{|\cO_x|}{|\cO_r|} \sum_{y\in\cO_r}P_{n,k}(y|x).
\end{align*}
The second equality uses the fact that $P_{n,k}(r|g(x))=P_{n,k}(g^{-1}(r)|x)=P_{n,k}(g(r)|x)$ for any $r$ and $x$, and the last equality uses the fact that $|\Stab_z|=|G|/|\cO_z|$ for any $z$.

The discussion above motivates the procedure described in \cref{alg:orbit}.

\begin{algorithm}
\caption{Computing $Y^{(t)}(y)$ for all $y\in\bits^k$}
\label{alg:orbit}
\SetKwInOut{Input}{Input}
\SetKwInOut{Output}{Output}

\Input{Input size $n$, output size $k$, input distribution $X^{(t)}$}
\Output{Array $Y^{(t)}$ indexed by orbit representatives of outputs}

Initialize $Y^{(t)}(r) \leftarrow 0$ for all orbit representatives $r\in\cR_k$

\ForEach{$x \in \cR_n$}{
    $o_x \leftarrow |\cO_x|$

    \ForEach{\textnormal{subsequence $y$ of $x$}}{
        $o_y \leftarrow |\cO_y|$ 
        
        $r \leftarrow \rep(y)$
        
        $Y^{(t)}(r) \leftarrow Y^{(t)}(r) + \dfrac{o_x}{o_y} \cdot X^{(t)}(x) \cdot P_{n,k}(y | x)$
    }
}
\Return{$Y^{(t)}$}
\end{algorithm}

\end{document}